\documentclass[]{llncs}
\usepackage[utf8]{inputenc}
\usepackage{amsmath,amssymb}
\usepackage{algorithm}
\usepackage{algpseudocode}
\usepackage{url,ifthen}
\usepackage[T1]{fontenc}
\usepackage{tabularx}
\usepackage{graphicx}
\usepackage{listings}
\usepackage{enumerate}
\usepackage{float}
\usepackage{imakeidx}
\usepackage{multirow}
\usepackage[pagebackref=true]{hyperref}
\usepackage[left=2.5cm,right=2.5cm,top=2.5cm,bottom=2.5cm]{geometry}
\usepackage[toc,page]{appendix}
\usepackage{xcolor}

\usepackage{tikz}
\usetikzlibrary{arrows,positioning, calc}
\tikzstyle{vertex}=[draw,fill=black!15,circle,minimum size=20pt,inner sep=0pt]

\DeclareMathOperator{\wf}{\mathsf{WF}}

\DeclareMathOperator{\scc}{\mathsf{SC}}
\DeclareMathOperator{\hsc}{\mathsf{HSC}}
\DeclareMathOperator{\sctkem}{\mathsf{SCTKEM}}
\DeclareMathOperator{\kem}{\mathsf{KEM}}
\DeclareMathOperator{\dem}{\mathsf{DEM}}
\DeclareMathOperator{\cma}{\mathsf{CMA}}
\DeclareMathOperator{\sd}{\mathsf{SD}}

\DeclareMathOperator{\cca}{\mathsf{CCA2}}
\DeclareMathOperator{\euf}{\mathsf{EUF}}
\DeclareMathOperator{\pr}{\mathsf{Pr}}
\DeclareMathOperator{\oracle}{\mathsf{Oracles}}
\DeclareMathOperator{\sufcma}{\mathsf{SUF-CMA}}
\DeclareMathOperator{\eufcma}{\mathsf{EUF-CMA}}
\DeclareMathOperator{\indcca}{\mathsf{IND-CCA2}}
\DeclareMathOperator{\pregame}{\mathsf{Preimage.Game}}
\DeclareMathOperator{\symencrypt}{\mathsf{SymEncrypt}}
\DeclareMathOperator{\symdecrypt}{\mathsf{SymDecrypt}}
\DeclareMathOperator{\decode}{\mathsf{Decode}}
\DeclareMathOperator{\setup}{\mathsf{Setup}}
\DeclareMathOperator{\keygen}{\mathsf{KeyGen}}
\DeclareMathOperator{\encrypt}{\mathsf{Encrypt}}
\DeclareMathOperator{\decrypt}{\mathsf{Decrypt}}
\DeclareMathOperator{\sign}{\mathsf{Sign}}
\DeclareMathOperator{\sig}{\mathsf{sign}}
\DeclareMathOperator{\verif}{\mathsf{Verif}}
\DeclareMathOperator{\sym}{\mathsf{Sym}}
\DeclareMathOperator{\encap}{\mathsf{Encap}}
\DeclareMathOperator{\decap}{\mathsf{Decap}}
\DeclareMathOperator{\signcrypt}{\mathsf{Signcrypt}}
\DeclareMathOperator{\unsigncrypt}{\mathsf{Unsigncrypt}}
\DeclareMathOperator{\adv}{\mathsf{Adv}}
\DeclareMathOperator{\wt}{\mathsf{wt}}
\DeclareMathOperator{\pkegame}{\mathsf{PKE.Game}}
\DeclareMathOperator{\pkeenc}{\mathsf{PKE.Encrypt}}
\DeclareMathOperator{\pkedec}{\mathsf{PKE.Decrypt}}

\DeclareMathOperator{\pate}{\mathsf{PaTe}}
\DeclareMathOperator{\attack}{\mathsf{Attack}}

\DeclareMathOperator{\sk}{\mathsf{sk}}
\DeclareMathOperator{\pk}{\mathsf{pk}}

\DeclareMathOperator{\state}{\mathsf{state}}


\newcommand{\comment}[1]{}



\def\DHLhksqrt#1#2{\setbox0=\hbox{$#1\sqrt{#2\,}$}\dimen0=\ht0
\advance\dimen0-0.2\ht0
\setbox2=\hbox{\vrule height\ht0 depth -\dimen0}%
{\box0\lower0.4pt\box2}}

\newboolean{visible}
\newcounter{todoCounter}
\newcommand{\todo}[1]{%
\addtocounter{todoCounter}{1}%
\ifthenelse{\boolean{visible}}%
{\textsuperscript{\fcolorbox{red}{white}{\tiny \arabic{todoCounter}}}%
\marginpar{\fcolorbox{red}{white}{\parbox{0.9in}{
\arabic{todoCounter}: \color{red} \small \sf  #1 }}}}{} }
\setboolean{visible}{true}
\newboolean{anonymous}
\setboolean{anonymous}{false}
\sloppy
\begin{document}
\pagestyle{plain}

\title{A Code-based Hybrid Signcryption Scheme}

{ 
	\author{Jean Belo Klamti \and M. Anwar Hasan }

} 
{
\institute{Department of Electrical and Computer Engineering\\ University of Waterloo \\200 University Ave W, Waterloo,
ON, N2L 3G1\\
}
\date{\today}

\maketitle

\begin{abstract}
A key encapsulation mechanism ($\kem$) that takes as input an arbitrary string, i.e., a tag, is known as tag-$\kem$, while a scheme that combines signature and encryption is called signcryption. In this paper, we present a code-based signcryption tag-$\kem$  scheme. We utilize a code-based signature and an $\indcca$ (adaptive chosen ciphertext attack) secure version of McEliece's {encryption} scheme. The proposed scheme uses an equivalent subcode as a public code for the receiver, making the NP-completeness of the  subcode equivalence problem be one of our main security assumptions. We then base the signcryption tag-$\kem$  to design a code-based hybrid signcryption scheme. A hybrid scheme deploys asymmetric- as well as symmetric-key encryption. We give security analyses of both our schemes in the standard model and prove that they are secure against $\indcca$ (indistinguishability under adaptive chosen ciphertext attack) and $\sufcma$ (strong existential unforgeability under chosen message attack).

\textbf{Keywords:} {Coding theory, signature scheme, public-key cryptography, code-based cryptography, signcryption.}
\end{abstract}

\section{Introduction}

In  public-key cryptography, the authentication and confidentiality of communication between a sender and a receiver are ensured  by a two-step approach called \textit{signature-then-encryption}. In this approach, the sender  uses a digital signature scheme to sign a message and then encrypt it using an encryption algorithm. 
The cost of delivering a message in a secure and authenticated way using the signature-then-encryption approach is essentially the sum of the cost of a digital signature and that of  encryption. 

In 1997, Y. Zheng introduced a new cryptographic primitive called \textit{signcryption} to provide both authentication and confidentiality  in a single logical step \cite{Zh97}. In general, one can expect the cost of signcryption to be noticeably less than that of signature-then-encryption.
Zheng's sincryption scheme is based on the hardness of the discrete logarithm problem. Since Zheng's work, a number of signcryption schemes based on different hard assumptions have been introduced, see for example \cite{Zh97,ZhIm98,StZh00,YaCaLiXu19,LiMuKhTa,BaLiMcQu10,BaLiBMcQu10.1,DeMa10,SaSh18,YaWaWaYaYa13,LeDuRoSuFuKi21,ZhWa14}. Of these, the most efficient ones have followed Zheng's approach, i.e., used symmetric-key encryption as a black-box component \cite{BaLiMcQu10,BaLiBMcQu10.1,DeMa10}. It has been of interest to many researchers to study how a combination of asymmetric- and symmetric-key encryption schemes could be used to build efficient signcryption schemes in a more general setting. 

To that end, Dent in 2004 proposed the first formal composition model for hybrid signcryption \cite{Dent04} and in 2005 developed an efficient model for signcryption $\kem$s in the \textit{outsider-} and the \textit{insider}-secure setting \cite{Dent05,Dent051}. In the outsider-secure setting the adversary is assumed to be distinct from the sender and receiver, while in the insider-secure setting the adversary is assumed to be a second party (i.e., either sender or receiver). 

In order to improve the model for the insider-secure setting in hybrid signcryption, Bj{\o}rstad and Dent  in 2006 proposed a model based on encryption tag-$\kem$ rather than regular encryption $\kem$ \cite{BjDe06}. Their model provides a simpler description of signcryption with a better generic security reduction for the  signcryption tag-$\kem$ construction. A year after Bj{\o}rstad and Dent's work, Yoshida and Fujiwara reported the first study of multi-user setting security of signcryption tag-$\kem$s \cite{YoFu07} which is a more suitable setting for the analysis of insider-secure schemes. 

\vspace{0.5em}
\subsubsection*{Motivation}

Most of the aforementioned signcryption schemes are based on the hardness of either the discrete logarithm  or the integer factorization problem and would be broken with the arrival of sufficiently large quantum computers. Therefore it is of interest to design signcryption schemes for the post-quantum era. Coding theory has some hard problems that are considered quantum-safe and in this paper, we explore the design of code-based signcryption. 

The first attempt for code-based signcryption was presented in 2012 by {Preetha et al.} \cite{MaVaRa12}. After that work, an attribute-based signcryption scheme  using linear codes was introduced in 2017 by {Song et al.} \cite{SoLiLiLi17}. Code-based signcryption remains an active area of research, specifically to study the design of cryptographic primitives like signcryption schemes that are quantum-safe.

\vspace{0.5em}
\subsubsection*{Contributions}

{In this paper, we present a signcryption tag-$\kem$  scheme using a probabilistic full domain hash (FDH) like code-based signature and a CCA2 secure version of  McEliece's encryption scheme. The underlying code-based signature in our scheme is called \textit{Wave}  introduced  by Debris-Alazard et al. \cite{Dags18}, while the CCA2 secure version of the McEliece scheme is based on the Fujisaki-Okamoto transformation introduced by Cayrel et al. \cite{CaHoPe12}. For the underlying McEliece scheme, we use a generator matrix of permuted Goppa subcodes as receivers' public keys. {With this feature, we are able to reduce the public key size of our scheme and include the subcode equivalence problem as one of your security assumptions. Because of the latter, for the key recovery attack, even if an adversary is able to distinguish whether the underlying code is a Goppa code, it has to solve the subcode equivalence problem which is NP-complete.} Thus, with well-chosen parameters, the most efficient attack against our scheme will be a brute-force attack.}

{Based on the signcryption tag-$\kem$, we design a code-based hybrid signcryption scheme. Then we give  security analyses of these two schemes in the standard model  assuming the insider-secure setting. Finally, we give a comparison of the hybrid signcryption with some relevant lattice-based signcryptions  in terms of key and ciphertext sizes.}

\vspace{0.5em}
\subsubsection*{Organization}

This paper is organized as follows. In Section \ref{Pre}, we first recall some basic notions of coding theory {and then briefly describe  relevant encryption   and } signature schemes that are of interest to this work. Section \ref{HSigncrypt} has the definition {and framework of signcryption and hybrid signcryption, and a brief review of the relevant security model.} We present our {sigcryption and hybrid sigcryption schemes} in Section \ref{gen_Code_base_Signcrypt} and then provide security analyses of the proposed schemes in Section \ref{Sec:Analysis}. We provide a set of parameters for the {hybrid sigcryption} scheme in Section \ref{Param} and then conclude in Section \ref{Conclusion}.

\vspace{0.5em}
\subsubsection*{Notations}

In this paper we use the following  notations:
\begin{itemize}
\item $\mathbb{F}_{q}$: finite field of size $q$ where $q=p^m$ is a prime power.
\item $\mathcal{C}$: $\mathbb{F}_{q}$-linear code of length $n$.
\item $\textbf{\textit{x}}$: a word or vector of $\mathbb{F}_{q}^n$.
\item $\wt(\textbf{\textit{x}})$: weight of $\textit{\textbf{x}}$.
\item $\mathbf{G}$ (resp. $\mathbf{H}$): generator (resp. parity-check) matrix of linear code $\mathcal{C}$.
\item $\mathcal{W}_{q,n, t}$ is the set of $q$-ary vectors of length $n$ and weight $t$.
\item $\sk_s$ (resp. $\sk_r$):  sender's (resp. receiver's) secrete key for  signcryption.
\item $\pk_s$ (resp. $\pk_r$): sender's (resp. receiver's) public key for  signcryption. 

\end{itemize}
\section{Preliminaries}\label{Pre}

In this section, we recall some notions pertaining to coding theory and code-based cryptography. 
\subsection{Coding theory and some relevant hard problems}$\ $

Let us consider the finite field $\mathbb{F}_{q}$. A $q$-ary linear code $\mathcal{C}$ of length $n$ and dimension $k$ over $\mathbb{F}_{q}$ is a vector subspace of dimension $k$ of $\mathbb{F}_{q}^{n}$. It can be specified by a full rank matrix $\mathbf{G}\in \mathbb{F}_{q}^{k\times n}$, called \textit{generator matrix} of $\mathcal{C}$, whose rows span the code. Namely, $\mathcal{C}=\left\lbrace \textit{\textbf{x}}\mathbf{G} \ \text{s.t.} \ \textit{\textbf{x}}\in \mathbb{F}_{q}^{k} \right\rbrace $. A linear code can  also be defined by the right kernel of matrix $\mathbf{H}\in \mathbb{F}_{q}^{r\times n}$, called \textit{parity-check matrix} of $\mathcal{C}$, as follows:
$$
\mathcal{C}=\left\lbrace \textit{\textbf{x}}\in \mathbb{F}_{q}^{n}  \ \ \text{s.t.} \ \  \mathbf{H}\textit{\textbf{x}}^{T}=\textit{\textbf{0}} \right\rbrace 
$$

  The \textit{Hamming distance} between two codewords is the number of positions (coordinates) where they differ. The \textit{minimal} distance of a code is the minimal distance of all codewords.

  The \textit{weight} of a word or vector $\textit{\textbf{x}}\in \mathbb{F}_{q}^{n}$, denoted by $wt\left(\textit{\textbf{x}}\right) ,$ is the number of its nonzero positions. Then the \textit{minimal} weight of a code $\mathcal{C}$ is the minimal weight of all nonzero codewords. In the case of linear code $\mathcal{C}$, its minimal distance is equal to the minimal weight of the code.

Below we recall some hard  problems  that are relevant to our discussions and analyses presented in this article. 

\begin{problem}\label{SD}(Binary syndrome decoding (SD) problem)
Given a  
matrix $\mathbf{H}\in \mathbb{F}_2^{r\times n}$, a vector $\textit{\textbf{s}}\in \mathbb{F}_2^r$, and an integer $\omega > 0$, 
find a 
vector $\textit{\textbf{y}}\in \mathbb{F}_2^n$ such that $\wt(\textit{\textbf{y}})=\omega$ and $\textit{\textbf{s}}=\textit{\textbf{y}}\mathbf{H}^T$.
\end{problem}

The syndrome decoding problem was proven to be NP-complete in 1978 by Berlekamp et al. \cite{BeMcVa78}. It is equivalent to the following problem.

\begin{problem}\label{GBD}(General decoding (GD) problem)
Given a 
matrix $\mathbf{G}\in \mathbb{F}_2^{k\times n}$, a vector  $\textit{\textbf{y}}  \in \mathbb{F}_2^n$, and an integer $\omega > 0$, 
find two 
vectors $\textit{\textbf{m}}\in \mathbb{F}_q^{k}$ and $\textit{\textbf{e}}\in \mathbb{F}_q^n$ such that $\wt(\textit{\textbf{e}})=\omega$ and $\textit{\textbf{y}}=\textit{\textbf{m}}\mathbf{G} \oplus \textit{\textbf{e}}$.
\end{problem}

The following problem is used in the security proof of the underlying signature that we use in this paper. It was first considered by Johansson and Jonsson in \cite{JoJo02}. It was analyzed later by Sendrier in \cite{Sen11}. 

\begin{problem}(Decoding One Out of Many (DOOM) problem)\label{DOOM}
Given a 
matrix $\mathbf{H}\in \mathbb{F}_q^{r\times n}$, a set of vector $\textit{\textbf{s}}_1$, $\textit{\textbf{s}}_2$,...,$\textit{\textbf{s}}_N\in \mathbb{F}_q^{r}$ and an integer $\omega$, 
find a 
vector $\textit{\textbf{e}}\in \mathbb{F}_q^n$ and an integer $i$ such that $1\leq i\leq N$, $\wt(\textit{\textbf{e}})=\omega$ and $\textit{\textbf{s}}_i=\textit{\textbf{e}}\mathbf{H}^T$.
\end{problem}

\begin{problem} \label{pr:distinguish} (Goppa code distinguishing (GCD) problem)
{Given a 
matrix $\mathbf{G}\in \mathbb{F}_2^{k\times n}$, 
decide 
whether $\mathbf{G}$ is a random binary or generator matrix of a Goppa code.}
\end{problem}

Faugère et al. \cite{FaGaOtPeTi13} showed that  Problem \ref{pr:distinguish} can be solved in special cases of Goppa codes with high rate.



The following is one of the problems, which the security assumption of our scheme's underlying signature mechanism relies on.

\begin{problem}(Generalized (${U}, {U}+{V}$)  code distinguishing problem.)\label{UVDP}
Given a 
matrix $\mathbf{H}\in \mathbb{F}_q^{r\times n}$, 
decide 
whether $\mathbf{H}$ is a parity-check matrix of a generalized ($U, U+V$)-code.
\end{problem}

Problem \ref{UVDP} was shown to be hard in the worst case by 
Debris-Alazard et al. \cite{DeSeTi17}
since it is NP-complete. Below, we recall the  subcode equivalence problem which is one of the problems on which the security assumption of our scheme is based. This problem  was proven to be NP-complete in 2017 by Berger et al. \cite{BGK17}. 

\begin{problem}(Subcode equivalence problem \cite{BGK17})
Given two linear codes $\mathcal{C}$ and $\mathcal{D}$ of length $n$ and respective dimension $k$ and $k'$, $k' \leq  k$, over the same finite field  $\mathbb{F}_{q}$, determine whether there exists a permutation $\sigma$ of  the support such that $\sigma(\mathcal{C})$  is a subcode of $\mathcal{D}$.
\end{problem}

\subsection{Code-based encryption }\label{Code-based}

The first code-based encryption was introduced in 1978 by R. McEliece \cite{Mce}. Below (in Figure \ref{fig:McElieceFO}) we give the McEliece scheme  Fujisaki-Okamoto transformation \cite{CaHoPe12} which comprises three algorithms: key generation, encryption, and decryption. 

\begin{figure}[ht]
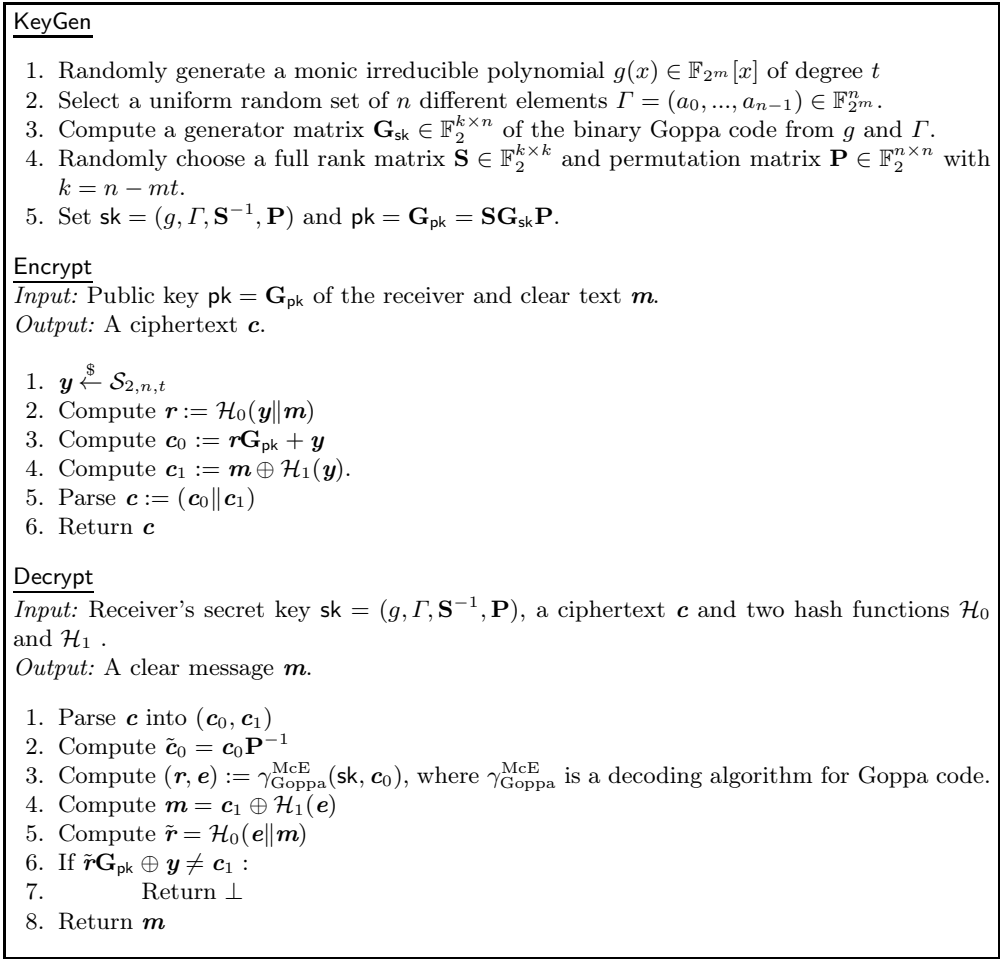

	\begin{center}
		\framebox{\parbox{13cm}{
	{\em \underline{$\keygen$}} 
	\begin{enumerate}
\item Randomly generate a monic irreducible polynomial $g(x)\in \mathbb{F}_{2^m}[x]$ of degree $t$
\item Select a uniform random set of $n$ different elements $\Gamma=(a_0,...,a_{n-1})\in \mathbb{F}_{2^m}^{n}$.
\item Compute a generator matrix $\mathbf{G}_{\sk}\in \mathbb{F}_2^{k\times n}$ of the binary Goppa code from $g$ and $\Gamma$.
\item Randomly choose a full rank matrix $\mathbf{S}\in \mathbb{F}_2^{k\times k}$ and permutation matrix $\mathbf{P}\in \mathbb{F}_2^{n\times n}$ with $k=n-mt$.
\item Set $\sk=(g, \Gamma, \mathbf{S}^{-1}, \mathbf{P})$ and  $\pk=\mathbf{G}_{\pk}=\mathbf{S}\mathbf{G}_{\sk}\mathbf{P}$.
\end{enumerate}
	\noindent\textbf{\underline{$\encrypt$}}

\noindent\textit{Input:} Public key $\pk=\mathbf{G}_{\pk}$ of the receiver and clear text $\textit{\textbf{m}}$. 
 
\noindent\textit{Output:} A ciphertext $\textit{\textbf{c}}$.
\begin{enumerate} 
\item $\textit{\textbf{y}}\stackrel{\$}{\leftarrow} \mathcal{S}_{2,n, t}$
\item Compute $\textit{\textbf{r}}:=\mathcal{H}_0(\textit{\textbf{y}}\Vert \textit{\textbf{m}})$ 
\item Compute $\textit{\textbf{c}}_0:=\textit{\textbf{r}}\mathbf{G}_{\pk} +\textit{\textbf{y}}$
\item Compute $\textit{\textbf{c}}_1:=\textit{\textbf{m}}\oplus\mathcal{H}_1(\textit{\textbf{y}})$.
\item Parse $\textit{\textbf{c}}:=(\textit{\textbf{c}}_0\Vert \textit{\textbf{c}}_1)$  
\item Return $\textit{\textbf{c}}$
\end{enumerate} 

\noindent\textbf{\underline{$\decrypt$}}

\noindent\textit{Input:} Receiver's secret key $\sk=(g, \Gamma, \mathbf{S}^{-1},  \mathbf{P})$,  a ciphertext $\textit{\textbf{c}}$ and two hash functions $\mathcal{H}_0$ and $\mathcal{H}_1$ . 
 
\noindent\textit{Output:} A clear message $\textit{\textbf{m}}$.
\begin{enumerate}
\item Parse $\textit{\textbf{c}}$ into ($\textit{\textbf{c}}_0,\textit{\textbf{c}}_1$) 
\item Compute $\tilde{\textit{\textbf{c}}}_0=\textit{\textbf{c}}_0\mathbf{P}^{-1}$
\item Compute $(\textit{\textbf{r}}, \textit{\textbf{e}}):=\gamma_{\text{Goppa}}^{\text{McE}}(\sk, {\textit{\textbf{c}}}_0)$, where $\gamma_{\text{Goppa}}^{\text{McE}}$ is a decoding algorithm for Goppa code.
\item Compute ${\textit{\textbf{m}}}= \textit{\textbf{c}}_1 \oplus \mathcal{H}_1(\textit{\textbf{e}})$
\item Compute $\tilde{\textit{\textbf{r}}}=\mathcal{H}_0(\textit{\textbf{e}}\Vert \textit{\textbf{m}})$
\item If $\tilde{\textit{\textbf{r}}}\mathbf{G}_{\pk}\oplus \textit{\textbf{y}} \neq \textit{\textbf{c}}_1:$
\item \hspace{1cm} Return $\perp$
\item  Return $ \textit{\textbf{m}}$
\end{enumerate}
}}  
\end{center}
\caption{\textit{McEliece's scheme with Fujisaki-Okamoto transformation}}
		\label{fig:McElieceFO}
	\end{figure}

The main drawback of the {McEliece} encryption scheme is its very large key size. To address this issue, many variants  of McEliece's scheme have been proposed, see for example \cite{Berg05,BCG09,MisBar09,MisTiSenBar2013,BarLinMiso2011,Persi2012}. In order to reduce the size of both public and private keys in code-based cryptography,  {H. Niederreiter} in 1986 introduced  a new cryptosystem \cite{Nied1986}. Niederreiter's cryptosystem is a dual version of McEliece's cryptosystem with some additional properties such that the ciphertext length  is relatively smaller. Indeed, the public key in Niederreiter's cryptosystem is a parity-check matrix instead of a generator matrix. In addition, ciphertexts are syndrome vectors instead of erroneous codewords. However, the {McEliece} and the Niederreiter schemes are equivalent from the security point of view due to the fact that Problems \ref{SD} and \ref{GBD} are equivalent.
%
\vspace{0.5cm}

\noindent \textbf{Code-based hybrid encryption:} A hybrid encryption scheme is a cryptographic protocol that features both an asymmetric-  and a symmetric-key encryption scheme. The first component is known as Key Encapsulation Mechanism ($\kem$), while the second is called Data Encapsulation Mechanism ($\dem$). The framework was first introduced in 2003 by Cramer and Shoup \cite{CrSh03} and later  the first code-based hybrid encryption was introduced in 2013 by Persichetti \cite{Persichetti13} using  {Niederreiter}'s encryption scheme. Persichetti's scheme was implemented in 2017 by {Cayrel et al.}  \cite{CaGuNdPe17}. After {Persichetti}'s work, some other code-based hybrid encryption schemes have been reported, e.g.,
\cite{MaVaRa13}.

\subsection{Code-based signature}
{Designing a secure and practical code-based signature scheme is still an open problem. The first secure code-based signature scheme was introduced by Courtois et al. (CFS) \cite{CFS01}. It is a full domain hash (FDH) like signature with two 
security assumptions: the indistinguishability of random binary linear codes and the hardness of syndrome decoding problem. To address some of the drawbacks of Courtois et al.'s scheme, Dallot proposed a modified  version, called mCFS, which is provably secure. Unfortunately, this scheme is not practical due to the difficulties  of finding a random decodable syndrome. In addition, 
the assumption of the indistinguishability of random binary Goppa codes has led to the emergence of attacks as described 
in \cite{FaGaOtPeTi13}. One of the latest code-based signature
schemes of this type is called Wave \cite{DeSeTi18}. It 
is based on generalized ($U, U+V$)-codes. It is secure and more efficient than the CFS signature scheme. In addition, it has a smaller signature size than almost all finalist candidates in the NIST post-quantum cryptography standardization process \cite{BaDeNe21}.}

Apart from the full domain hash approach, it is possible to design signature schemes by applying the Fiat-Shamir transformation \cite{Fiat86} to an identification protocol. To this end, one may use a code-based identification scheme like that of Stern \cite{Stern93}, Jain et al. \cite{JaKrPiTe12},  or Cayrel et al. 
\cite{CVE10}. This approach however leads to a signature scheme with a very large signature size. To address this issue, Lyubashevsky's framework \cite{Lyub09} can apparently be adapted. Unfortunately, almost all code-based signature schemes in Hamming  metric designed by using this framework have been cryptanalyzed \cite{BiMiPeSa20,Persichetti2018,Persichetti12,RaCoSS17,LiXiYe20,SoHuMuWuHu20}. The only one which has remained secure so far is a rank metric-based signature scheme proposed by Aragon et al.\cite{Durendal18}.

In Figure \ref{fig:Wave}, we recall Debris-Alazard et al.'s signature scheme (Wave) which is of our interest for this work. In Wave, the secret key is a tuple of three matrices $\sk=(\mathbf{S}, \mathbf{H}_{\sk}, \mathbf{P})$, where $\mathbf{S}\in \mathbb{F}_q^{r\times r}$ is an invertible matrix, $\mathbf{H}_{\sk}\in \mathbb{F}_q^{r\times n}$ is a parity-check matrix of a generalized ($U, U+V$)-code and $\mathbf{P}\in \mathbb{F}_2^{n\times n}$ is a permutation matrix. The public key is a matrix $pk=\mathbf{H}_{\pk}$, where $\mathbf{H}_{\pk}=\mathbf{S}\mathbf{H}_{\sk}\mathbf{P}$. Steps for signature and verification processes are given in  Figure \ref{fig:Wave}. For additional details, the reader is referred to \cite{DeSeTi18a,DeSeTi18}.

\begin{figure}[ht]
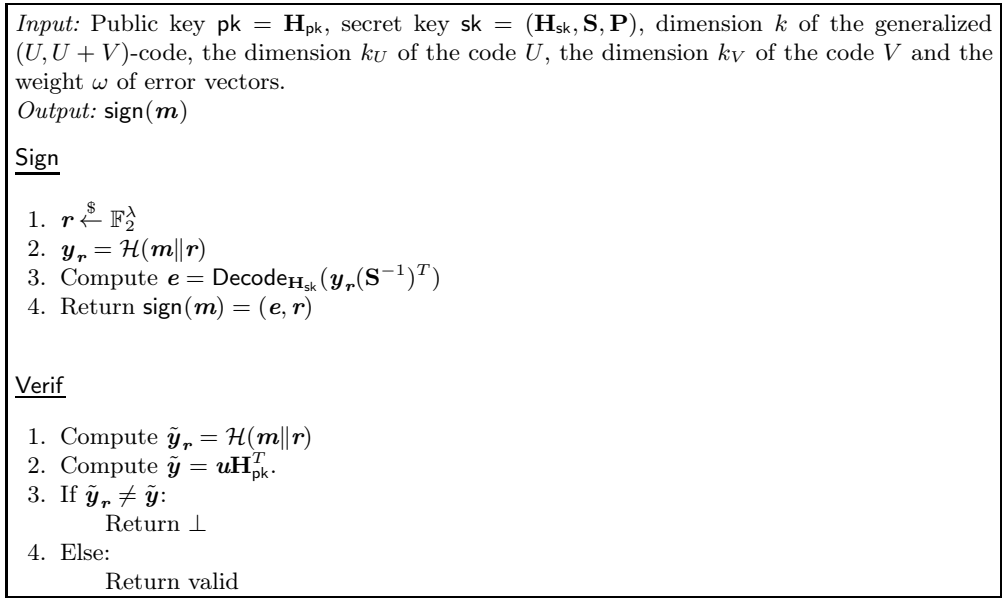

	\begin{center}
		\framebox{\parbox{13cm}{
		\begin{minipage}[top]{13cm}
		    {\em\textit{Input:}} Public key $\pk={\mathbf{H}_{\pk}}$, secret key $ \sk=(\mathbf{H}_{\sk}, \mathbf{S}, \mathbf{P})$, dimension $k$ of the generalized ($U, U+V$)-code, the dimension $k_U$ of the code $U$, the dimension $k_V$ of the code $V$ and the weight $\omega$ of error vectors.

            {\em\textit{Output:}} $\sig(\textbf{\textit{m}})$
		\end{minipage}
		
\vspace{0.3cm}

\begin{minipage}[left]{13cm}
 {\em\underline{$\sign$}} 
	\begin{enumerate}

		\item $\textit{\textbf{r}}\stackrel{\$}{\leftarrow} \mathbb{F}_{2}^{\lambda}$
		\item $\textbf{\textit{y}}_{\textit{\textbf{r}}}=\mathcal{H}( \textit{\textbf{m}} \Vert \textit{\textbf{r}} )$
		\item Compute $\textit{\textbf{e}}=\decode_{\mathbf{H}_{\sk}}( \textbf{\textit{y}}_{\textit{\textbf{r}}}(\mathbf{S}^{-1})^T)$ 
		\item Return $\sig(\textit{\textbf{m}})=(\textit{\textbf{e}},  \textit{\textbf{r}})$
		\item[] 
	\end{enumerate}

	{\em\underline{$\verif$}}
	\begin{enumerate}
		\item Compute $\tilde{\textbf{\textit{y}}}_{\textit{\textbf{r}}}=\mathcal{H}( \textit{\textbf{m}}\Vert \textit{\textbf{r}} )$
		\item Compute $\tilde{\textbf{\textit{y}}}= \textit{\textbf{u}}{\mathbf{H}_{\pk}^T}$.
		\item If $\tilde{\textbf{\textit{y}}}_{\textit{\textbf{r}}} \neq \tilde{\textbf{\textit{y}}}$:
		\begin{itemize}
		    \item[] Return $\perp$
		\end{itemize}
		\item Else:
		\begin{itemize}
		    \item[] Return valid
		\end{itemize}
	\end{enumerate}
\end{minipage}
}} 
		\end{center}
		\caption{\textit{Wave signature scheme }\cite{DeSeTi18}}
		\label{fig:Wave}
	\end{figure}

\section{Signcryption and security model}\label{HSigncrypt}

In this section, we first recall the definition of signcryption followed by the signcryption tag-$\kem$ framework  and its security model under the insider setting.

\subsection{Signcryption and its tag-KEM framework}
\textbf{Signcryption:} A signcryption scheme is a tuple of algorithms $\scc$=($\setup$, $\keygen_s$, $\keygen_r$, $\signcrypt$, $\unsigncrypt$) \cite{BaStZh07} where:
\begin{enumerate}
\item[$\ast$] $\setup$($1^{\lambda}$) is the common parameter generation algorithm  with $\lambda$, the security parameter,
\item[$\ast$] $\keygen_s$(resp. $\keygen_r$) is a key-pair generation algorithm for the sender (resp. receiver),
\item[$\ast$] $\signcrypt$ is the signcryption algorithm and
\item[$\ast$] $\unsigncrypt$ corresponds to the unsigncryption algorithm.
\end{enumerate}

 For more details on the design of signcryption, the reader is referred to \cite{YuDeZh} (Chap. 2, Sec. 3, p. 30).

\vspace{0.5cm}
\noindent
{\bf Signcryption tag-$\kem$:} A signcryption tag-$\kem$ denoted by $\sctkem$ is a tuple of algorithms \cite{BjDe06}: $$\sctkem=(\setup, \keygen_s, \keygen_r, \sym, \encap, \decap)$$ where,
\begin{itemize}
\item $\setup$ is an algorithm for generating common parameters. 
\item $\keygen_s$ (resp. $\keygen_r$) is the sender (resp. receiver) key generation algorithm. It takes as input the
global information $I$, and returns a private/public keypair ($\sk_s$, $\pk_s$) (resp. ($\sk_r$, $\pk_r$)) that is used to send signcrypted messages.
\item $\sym$ is a symmetric key generation algorithm. It takes as input the private key of the sender $\sk_s$ and the public key of the receiver $\pk_r$ and outputs a symmetric key $K$ together with internal state information $\varpi$.
\item $\encap$ takes as input the state information $\varpi$ together with an arbitrary string $\tau$, which is called a tag, and outputs an encapsulation $E$. 
\item $\decap$ is the decapsulation/verification algorithm. It takes as input the sender's public key $\pk_s$, the receiver's private key $\sk_r$, an encapsulation $E$, and a tag $\tau$. It returns either symmetric key $K$ or the unique error symbol $\perp$.
\end{itemize}

\vspace{0.5cm}
\noindent
{\bf Hybrid signcryption tag-$\kem$+$\dem$:} It is simply a combination of a $sctkem$ and a regular Data Encapsulation Mechanism ($\dem$).  
 
\subsection{Insider security for signcryption tag-KEM}

{\bf $\indcca$ game in signcryption tag-$\kem$:} It corresponds to a game between a challenger and a probabilistic polynomial-time adversary $\mathcal{A}_{\cca}$ such that  the latter tries to distinguish whether a given session key $K$ is the one embedded in an encapsulation or not. During this game, $\mathcal{A}_{\cca}$ has adaptive access to three oracles for the attacked user corresponding to 
algorithms
$\sym$, $\encap$, and $\decap$ \cite{BjDe06,YuDeZh,YoFu07}. The game is described in Figure \ref{CCA2game} below.

\begin{figure}[ht]
	\begin{center}
		\framebox{\parbox{13cm}{
	{\underline{$\oracle$}} 
	\begin{enumerate}
\item $\mathcal{O}_{\sym}$ is the symmetric key generation oracle with input a public key $\pk$, and computes ($K$, $\omega$) = $\sym$($\sk_s$, $\pk$). It then stores the value $\omega$ (hidden from the view of the adversary, and overwriting any previously stored values),
and returns the symmetric key $K$.
\item $\mathcal{O}_{\encap}$ is the key encapsulation oracle. It takes an arbitrary tag $\tau$ as input and checks whether there exists a stored value $\omega$. If there is not, it returns $\perp$
and terminates. Otherwise, it erases the value from storage and returns $E=\encap(\omega, \tau)$.
\item $\mathcal{O}_{\decap}$ corresponds to the decapsulation/verification oracle. It takes an encapsulation $E$, a tag $\tau$, any sender's public key $\pk$ as input and returns $\decap(\pk, \sk_r, E, \tau)$.
\end{enumerate}
	{\underline{$\indcca$ Game for $\sctkem$}} 
	\begin{enumerate}
	\item $I:=\setup(1^{\lambda})$ 
	\item $(\sk_r, \pk_r):= \keygen_r(I)$
	\item $(\sk_s, {\state}_1):=\mathcal{A}_{\cca}^{\mathcal{O}_{\sym},\mathcal{O}_{\encap}, \mathcal{O}_{\decap}}(\pk_r)$
	\item $(K_1, \varpi):=\sym(\sk_s, \pk_r)$, $K_0\stackrel{\$}{\leftarrow} \mathcal{K}$ and $b\stackrel{\$}{\leftarrow} \{0, 1\}$
	\item $(\tau, {\state}_2):=\mathcal{A}_{\cca}^{\mathcal{O}_{\sym},\mathcal{O}_{\encap}, \mathcal{O}_{\decap}}( K_b, {\state}_1)$
	\item $E:=\encap(\varpi, \tau)$
	\item $b':=\mathcal{A}_{\cca}^{\mathcal{O}_{\sym},\mathcal{O}_{\encap}, \mathcal{O}_{\decap}}(E,{\state}_2)$
	\end{enumerate}
    }}  
	\end{center}
	\caption{{$\indcca$ game} \cite{YoFu07}.}
	\label{CCA2game}
	\end{figure}

During  Step 7, the adversary $\mathcal{A}_{\cca}$ is restricted not to make  decapsulation queries on  $(E, \tau)$ to the decapsulation oracle. The advantage of the adversary $\mathcal{A}$ is defined by:
$$
\adv(\mathcal{A}_{\cca})=|\pr (b'=b)-1/2|.
$$ 

A  signcryption tag-$\kem$ is {$\indcca$} secure
if, for any adversary $\mathcal{A}$, its advantage in the {$\indcca$} game is negligible with respect to the security parameter $\lambda$.

\vspace{0.5cm}
 \noindent
{\bf $\sufcma$ game for signcryption tag-$\kem$:}
This game is a challenge between a challenger and a probabilistic polynomial-time adversary  (i.e., a forger) $\mathcal{F}_{\cma}$. In this game, the forger tries to generate a valid encapsulation $E$ from the sender to any receiver, with adaptive access to the three oracles. The adversary is allowed to come up with the presumed secret key $\sk_r$ as part of his forgery  \cite{YoFu07}:

\begin{figure}[ht]
	\begin{center}
		\framebox{\parbox{13cm}{
	{\underline{$\sufcma$ Game for $\sctkem$}} 
	\begin{enumerate}
	\item $I:=\setup(1^{\lambda})$ 
	\item $(\sk_s, \pk_s):= \keygen_r(I)$
	\item $(E, \tau, \sk_r):=\mathcal{F}_{\cma}^{\mathcal{O}_{\sym},\mathcal{O}_{\encap}, \mathcal{O}_{\decap}}(\pk_s)$
	\end{enumerate}
}}  
		\end{center}
		\caption{\textit{$\sufcma$ game} \cite{YoFu07}.}
		\label{SUFgame}
	\end{figure}
The adversary $\mathcal{F}_{\cma}$ wins the $\sufcma$ game if
$$\perp\neq  \decap(\pk_s, \sk_r, E, \tau)$$ and the encapsulation oracle never returns $E$ when he queries on the tag $\tau$. 
The advantage of $\mathcal{F}_{\cma}$ is the probability that $\mathcal{F}_{\cma}$ wins the {$\sufcma$} game. A  signcryption tag-$\kem$ is {$\sufcma$} secure if the winning probability of the {$\sufcma$} game by $\mathcal{F}_{\cma}$ is negligible.

\begin{definition}
A  signcryption tag-$\kem$ is said to be secure if it is $\indcca$ and $\sufcma$ secure.
\end{definition}

\subsection{Generic security criteria of hybrid signcryption tag-KEM+DEM}

\vspace{0.5cm}
 \noindent
{\bf Security criteria for hybrid signcryption:} The security of a hybrid signcryption tag-$\kem$+$\dem$ depends on those of the underlying signcryption tag-$\kem$ and $\dem$. However, it is important to note that in  the standard model a signcryption tag-$\kem$ is secure if it is both $\indcca$ and $\sufcma$ secure. Therefore, the generic security criteria for hybrid signcryption tag-$\kem$+$\dem$ is given by the following theorem: 

\begin{theorem}\label{Genesec} \cite{YoFu07,BjDe06}$\ $
Let $\hsc$ be a hybrid signcryption scheme constructed from a signcryption {{tag-$\kem$}} and a $\dem$. If the signcryption tag-$\kem$ is $\indcca$ secure and the $\dem$ is one-time secure, then $\hsc$ is $\indcca$ secure. Moreover, if the signcryption tag-$\kem$ is $\sufcma$ secure, then $\hsc$ is also $\sufcma$ secure.
\end{theorem}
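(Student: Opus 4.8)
We only sketch the (standard) reduction. The plan is a short sequence-of-games argument exploiting the structure of the $\hsc$ construction: to signcrypt a message $m$ one draws $(K,\varpi):=\sym(\sk_s,\pk_r)$, forms the $\dem$-ciphertext $c:=\symencrypt(K,m)$, sets $E:=\encap(\varpi,c)$, and outputs $(E,c)$; thus the tag passed to the tag-$\kem$ is exactly the $\dem$ component of the signcryption, and $\unsigncrypt$ reverses this by recovering $K:=\decap(\pk_s,\sk_r,E,c)$ and returning $\symdecrypt(K,c)$ (or $\perp$ if $K=\perp$).

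\textbf{$\indcca$ security.} First I would play two games. Game~$0$ is the real insider $\indcca$ experiment for $\hsc$: the adversary $\mathcal{A}$ hands over $\sk_s$, later submits two equal-length messages $m_0,m_1$, receives a challenge $(E^{*},c^{*})$ signcrypting $m_\delta$ for a secret bit $\delta$, and has adaptive access to the decapsulation/unsigncryption oracle, which is forbidden on $(E^{*},c^{*})$ under the attacked sender (the symmetric-key and encapsulation oracles are simulated in the obvious way and are redundant in the insider setting). In Game~$1$ the symmetric key used to form $c^{*}$ is replaced by a fresh uniform $K'\stackrel{\$}{\leftarrow}\mathcal{K}$, while $E^{*}:=\encap(\varpi,c^{*})$ is still built from the honestly generated state $\varpi$. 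The gap between the two games is bounded by the advantage of a tag-$\kem$ $\indcca$ adversary $\mathcal{B}$: $\mathcal{B}$ forwards $\mathcal{A}$'s chosen $\sk_s$ to its own challenger, receives the challenge key $K_b$, sets $c^{*}:=\symencrypt(K_b,m_\delta)$, submits $\tau^{*}:=c^{*}$ as its tag to obtain $E^{*}$, hands $(E^{*},c^{*})$ to $\mathcal{A}$, answers unsigncryption queries by $\mathcal{O}_{\decap}$ followed by $\symdecrypt$, and outputs the guess ``$b=1$'' iff $\mathcal{A}$ recovers $\delta$; the case $b=1$ reproduces Game~$0$, the case $b=0$ reproduces Game~$1$, and the single forbidden query matches exactly. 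In Game~$1$ the key $K'$ is used only inside $c^{*}$ and is otherwise independent of $\mathcal{A}$'s view (the lone decapsulation query that would expose the honest key is disallowed, and every other query is answered from $\sk_r$, which is independent of $K'$), so $\mathcal{A}$'s advantage in Game~$1$ is at most the one-time advantage of a $\dem$ distinguisher $\mathcal{C}$ that knows $\sk_s$ and $\sk_r$, runs Game~$1$ itself, and plants its one-time challenge as $c^{*}$. Hence $\adv(\mathcal{A}_{\cca})\le\adv(\mathcal{B}_{\cca})+\adv(\mathcal{C})$, which is negligible.

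\textbf{$\sufcma$ security.} Given a forger $\mathcal{F}$ against $\hsc$ I would construct a tag-$\kem$ forger $\mathcal{F}'$. On input $\pk_s$, $\mathcal{F}'$ runs $\mathcal{F}(\pk_s)$, answers each signcryption query on $m$ by $K:=\mathcal{O}_{\sym}(\pk_r)$, $c:=\symencrypt(K,m)$, $E:=\mathcal{O}_{\encap}(c)$, returning $(E,c)$, and relays unsigncryption queries through $\mathcal{O}_{\decap}$ and $\symdecrypt$. When $\mathcal{F}$ outputs a forged signcryption $(E^{*},c^{*})$ together with a claimed $\sk_r$, $\mathcal{F}'$ outputs $(E^{*},\tau^{*}:=c^{*},\sk_r)$. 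If $\mathcal{F}$'s output is a valid $\hsc$ forgery then $\unsigncrypt(\pk_s,\sk_r,(E^{*},c^{*}))\neq\perp$, which forces $\decap(\pk_s,\sk_r,E^{*},c^{*})\neq\perp$; and since $(E^{*},c^{*})$ was never returned by the signcryption oracle, no $\mathcal{O}_{\encap}$ query on the tag $c^{*}$ ever returned $E^{*}$ (such a query would have made the signcryption oracle output precisely $(E^{*},c^{*})$). Therefore $\mathcal{F}'$ wins its game whenever $\mathcal{F}$ wins, so $\adv(\mathcal{F}_{\cma})\le\adv(\mathcal{F}'_{\cma})$, which is negligible.

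\textbf{Main obstacle.} There is no deep obstacle; the step that needs care is the faithful simulation of the decapsulation/unsigncryption oracle in the $\indcca$ reduction, together with checking that the ``forbidden challenge query'' restriction and the ``challenge never output by the signcryption oracle'' condition translate cleanly between $\hsc$ and the underlying tag-$\kem$. This is exactly where the tag-$\kem$ formulation pays off: since $c$ is bound into the tag $\tau$, any tampering with the $\dem$ part changes the tag, so a merely one-time (passively) secure $\dem$ suffices and no $\dem$ decryption oracle is ever needed in the reduction.
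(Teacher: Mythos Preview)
The paper does not give its own proof of this theorem: it is stated as a quotation of results from \cite{YoFu07,BjDe06} and is immediately followed by the next section, with no proof environment. So there is nothing in the paper to compare your argument against line by line.

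That said, your sketch is a faithful rendition of the standard hybrid-argument proof from the cited references. The two-game sequence for $\indcca$ (replace the challenge key by a fresh uniform key, bound the gap by a tag-$\kem$ distinguisher, then bound the residual advantage by a one-time $\dem$ distinguisher) and the direct forgery reduction for $\sufcma$ (relay signcryption queries through $\mathcal{O}_{\sym}$ and $\mathcal{O}_{\encap}$, output the $\dem$ component as the tag) are exactly the arguments of Bj{\o}rstad--Dent and Yoshida--Fujiwara. Your observation that binding $c$ into the tag is precisely what lets a passively one-time secure $\dem$ suffice is the key structural point those papers make. Nothing is missing; the only cosmetic remark is that in the paper's $\sufcma$ game (Figure~\ref{SUFgame}) the forger also has access to $\mathcal{O}_{\decap}$, which your $\mathcal{F}'$ can simply relay as well, so the simulation remains perfect.
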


\section{Code-based hybrid signcryption}\label{gen_Code_base_Signcrypt}

In this section, we first design a code-based signcryption tag-$\kem$  scheme. Then we combine it with a one-time (OT) 
secure $\dem$ for designing a hybrid signcryption tag-$\kem$+$\dem$ scheme.
 
\subsection{Code-based signcryption tag-KEM scheme}\label{tagKEM}

For  designing  our code-based signcryption tag-$\kem$ scheme, we use  the McEliece scheme as the underlying encryption scheme.  More specifically,  in order to achieve the $\indcca$ security  for our schemes, we use McEliece's  scheme with the Fujisaki-Okamoto transformation  \cite{FuOk99,CaHoPe12}. The authors of \cite{CaHoPe12} gave an instantiation of this scheme using generalized Srivastava (GS) codes. Indeed, by using GS codes, it seems possible to choose secure parameters even for codes defined over relatively small extension fields. However, Barelli and Couvreur recently introduced  an efficient structural attack  \cite{BaCo18} against some of the candidates in the NIST post-quantum cryptography standardization process. Their attack is against code-based encryption schemes using some quasi-dyadic alternant codes with extension degree $2$. It works specifically for schemes based on GS code called DAGS \cite{Dags18}. Therefore, in our work, we use the Goppa code with the Classic McEliece parameters. As for the underlying signature scheme, we use the code-based Wave \cite{DeSeTi18} as described earlier.

The fact that we use  Wave, the sender's secret key is a generalized ($U, U+V$)-code over a finite field $\mathbb{F}_q$ with $q>2$. {Its public key is a parity-check matrix of a code equivalent to the previous one.  To reduce the public key size, we use a permuted Goppa subcode for the receiver's public key. Thus, we include the subcode equivalence problem as one of the security assumptions of our scheme.  In Fig. \ref{fig:Com}, we describe the algorithm $\setup$ which will provide common parameters for our scheme.}

\begin{figure}[ht]
	\begin{center}
		\framebox{\parbox{13cm}{
\begin{minipage}[top]{13cm}

\noindent\textbf{\underline{$\setup$}}

\noindent\textit{Input:} ($1^{\lambda}$)

\noindent\textit{Output:}
\begin{itemize}
\item Parameters of sender's generalized ($U, U+V$)-code: code length $n_s$, dimension $k_U$ of U, dimension $k_V$ of V, dimension $k_s=k_U+k_V$ of the  generalized ($U, U+V$)-code, weight of error vector $\omega$, cardinality $q$ of the finite field $\mathbb{F}_q$.
\item Parameters of receiver's Goppa code: degree $m$ of extension $\mathbb{F}_{2^m}$ of $\mathbb{F}_2$, length $n_r$ of the Goppa code, degree  $t$ of Goppa polynomial $g_r$, dimension $\tilde{k}$ of Goppa subcode.

\item A cryptographic hash functions  $\mathcal{H}_1:\{0,1\}^{*} \longrightarrow \{0,1\}^{\tilde{k}}$ 
\item A cryptographic hash functions $\mathcal{H}_0:\{0,1\}^{*} \longrightarrow \{0,1\}^{\ell}$ where $\ell$ is  the bit length  of the symmetric encryption key.
\item A hash function $\mathcal{H}_2:\{0,1\}^{*} \longrightarrow \{0,1,2\}^{r_s}$ where $r_s=n_s-k_s$.
\item A cryptographic hash function $\mathcal{H}_3:\{0,1\}^{*} \longrightarrow \{0,1\}^{\tilde{k}+\ell}$ 
\item An encoding function $\phi:\mathbb{F}_2^{\kappa}\longrightarrow \mathcal{W}_{2,n_r, t}$ where $\kappa$ is a well chosen parameters such that  $\kappa = \left\lfloor \binom{n_r}{t}\right\rfloor$ and $\mathcal{W}_{2,n_r, t}$ is the set of binary vectors of length $n_r$ and Hamming weight $t$.
\end{itemize}
\end{minipage}
}}  
		\end{center}
		\caption{Description of the $\setup$ algorithm for common parameters.}
		\label{fig:Com}
	\end{figure}

We give  key generation algorithms in Figure \ref{fig:Keygen}, where we denote the sender key generation algorithm by $\keygen_s$ and that of the receiver by $\keygen_r$. The receiver algorithm $\keygen_r$ returns as signcryption public key a generator matrix $\mathbf{G}_{\pk,r}\in \mathbb{F}_2^{\tilde{k}\times n_r}$ of a Goppa subcode equivalent. It returns as signcryption secret key the tuple ($g_r, \Gamma_r, \mathbf{S}_r^{-1}, \mathbf{P}_r$), where $\Gamma_r$ and $g_r$ are, respectively, the support and the polynomial of a Goppa code.  $\mathbf{S}_r\in \mathbb{F}_2^{\tilde{k}\times k_r}$ is a full rank matrix and $\mathbf{P}_r$ a permutation matrix. The sender key generation algorithm $\keygen_s$ returns as private key three matrices $\mathbf{S}_s\in \mathbb{F}_3^{(n_s-k_s)\times (n_s-k_s)}$, $\mathbf{H}_{\sk,s}\in \mathbb{F}_3^{(n_s-k_s)\times n_s}$ and $\mathbf{P}_s\in \mathbb{F}_2^{n_s\times n_s}$, where $\mathbf{S}_s\in \mathbb{F}_3^{(n_s-k_s)\times (n_s-k_s)}$ is an invertible matrix, $\mathbf{H}_{\sk,s}\in \mathbb{F}_3^{(n_s-k_s)\times n_s}$ a parity-check matrix of a random generalized ($U, U+V$)-code and $\mathbf{P}\in \mathbb{F}_2^{n_s\times n_s}$ a permutation matrix. The sender public key is a parity-check matrix $\mathbf{H}_{\pk,s}\in \mathbb{F}_3^{(n_s-k_s)\times n_s}$ of a generalized ($U, U+V$) equivalent code given by $\mathbf{H}_{\pk,s}=\mathbf{S}_s\mathbf{H}_{\sk,s}\mathbf{P}_s$.

\begin{figure}[ht]
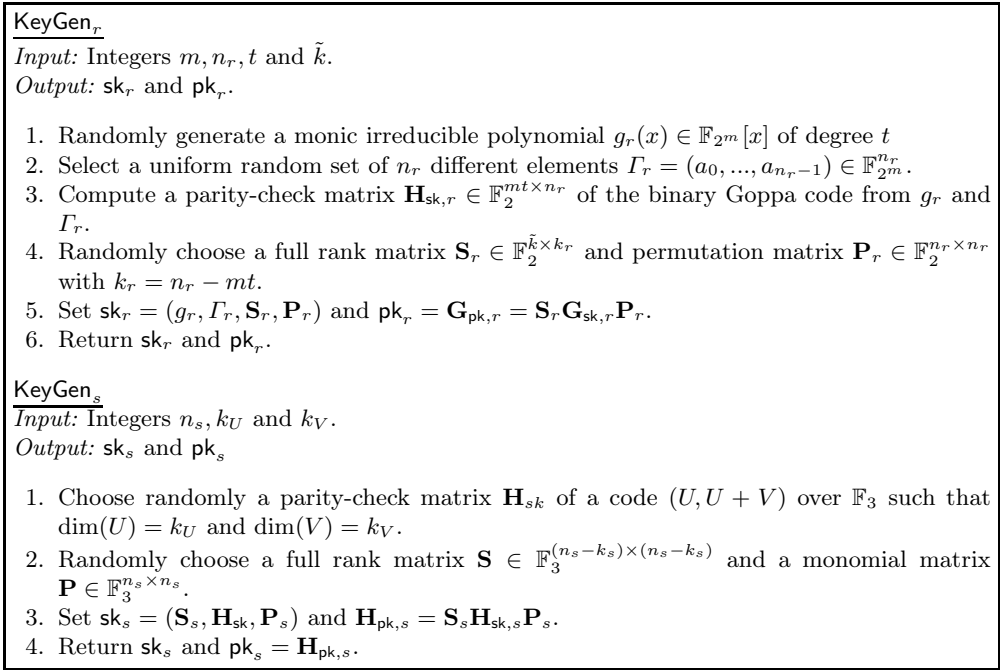

	\begin{center}
		\framebox{\parbox{13cm}{
\begin{minipage}[top]{13cm}

\noindent\textbf{\underline{$\keygen_r$}}

\noindent\textit{Input:} Integers $m, n_r, t$ and $\tilde{k}$.

\noindent\textit{Output:} $\sk_r$ and $\pk_r$.
\begin{enumerate}
\item Randomly generate a monic irreducible polynomial $g_r(x)\in \mathbb{F}_{2^m}[x]$ of degree $t$
\item Select a uniform random set of $n_r$ different elements $\Gamma_r=(a_0,...,a_{n_r-1})\in \mathbb{F}_{2^m}^{n_r}$.
\item Compute a parity-check matrix $\mathbf{H}_{\sk, r}\in \mathbb{F}_2^{mt\times n_r}$ of the binary Goppa code from $g_r$ and $\Gamma_r$.
\item Randomly choose a full rank matrix $\mathbf{S}_r\in \mathbb{F}_2^{\tilde{k}\times k_r}$ and permutation matrix $\mathbf{P}_r\in \mathbb{F}_2^{n_r\times n_r}$ with $k_r=n_r-mt$.
\item Set $\sk_r=(g_r, \Gamma_r, \mathbf{S}_r, \mathbf{P}_r)$ and  $\pk_r=\mathbf{G}_{\pk,r}=\mathbf{S}_r\mathbf{G}_{\sk, r}\mathbf{P}_r$.
\item Return $\sk_r$ and $\pk_r$.
\end{enumerate}

\noindent\textbf{\underline{$\keygen_s$}}

\noindent\textit{Input:} Integers $n_s, k_U$ and  $k_V $.

\noindent\textit{Output:} $\sk_s$ and $\pk_s$
\begin{enumerate}
\item Choose randomly a parity-check matrix $\mathbf{H}_{sk}$ of a code ($U, U+V$) over $\mathbb{F}_3$ such that dim$(U)=k_U$ and dim$(V)=k_V$.
\item Randomly choose a full rank matrix $\mathbf{S}\in \mathbb{F}_3^{(n_s-k_s)\times (n_s-k_s)}$ and a monomial matrix $\mathbf{P}\in \mathbb{F}_3^{n_s\times n_s}$.
\item Set $\sk_s=(\mathbf{S}_s, \mathbf{H}_{\sk},\mathbf{P}_s)$ and $\mathbf{H}_{\pk, s}=\mathbf{S}_s \mathbf{H}_{\sk,s} \mathbf{P}_s$.
\item Return $\sk_s$ and $\pk_s=\mathbf{H}_{\pk, s}$.
\end{enumerate}
\end{minipage}
}}		
\end{center}
\caption{Description of the key generation algorithms.}
\label{fig:Keygen}
	\end{figure}

In Figure \ref{fig:Code-based-SCTKEM2}, we give the design of the symmetric key generation algorithm $\sym$ of our scheme. The algorithm $\sym$ takes as input the bit length $\ell$ of the symmetric encryption key. It outputs an internal state information $\varpi$ and the session key $K$, where $\varpi$ is randomly chosen from $\mathbb{F}_2^{\ell}$, and $K$ is computed by using the hash function $\mathcal{H}_0$.
\begin{figure}[ht]
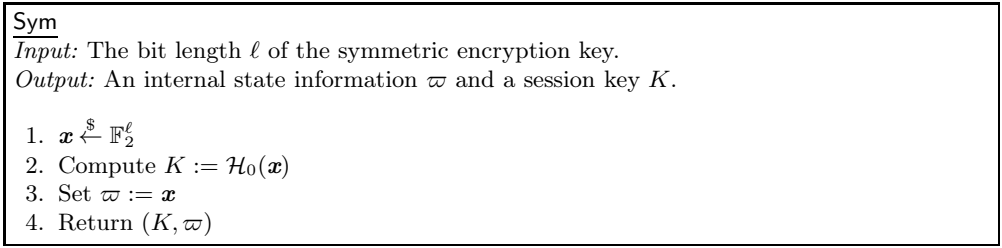

	\begin{center}
		\framebox{\parbox{13cm}{
\begin{minipage}[right]{13cm}

\noindent\textbf{\underline{$\sym$}}

\noindent\textit{Input:} The bit length $\ell$ of the symmetric encryption key. 
 
\noindent\textit{Output:} An internal state information $\varpi$ and a session key $K$.
\begin{enumerate}
\item $\textit{\textbf{x}}\stackrel{\$}{\leftarrow} \mathbb{F}_2^{\ell}$
\item Compute $K:=\mathcal{H}_0(\textit{\textbf{x}})$
\item Set $\varpi:=\textit{\textbf{x}}$
\item Return $(K, \varpi)$
\end{enumerate}
\end{minipage}
}}  
		\end{center}
		\caption{Description of the $\sym$ algorithm.}
		\label{fig:Code-based-SCTKEM2}
	\end{figure}

Figure \ref{fig:Code-based-SCTKEM3} provides a description of the encapsulation  and decapsulation algorithms of our signcryption tag-$\kem$ scheme. We denote the encapsulation algorithm by $\encap$ and the decapsulation by $\decap$. In the encapsulation algorithm, the sender first performs a particular Wave signature on the message $\textit{\textbf{m}}=\tau \Vert \varpi$, where $\varpi$ corresponds to an internal state information and $\tau$ is the input tag. The signature in the Wave scheme  comprises two parts: an error vector $\textit{\textbf{e}}\in \mathbb{F}_3^{n_s}$ and a random binary vector $\textit{\textbf{y}}$. In our scheme, $\textit{\textbf{z}}$ is the hash of a random coin $\textit{\textbf{y}}\in \mathbb{F}_2^{\kappa}$. The sender then performs an encryption of $\textit{\textbf{m}}'=\mathcal{H}_1(\tau)\Vert\varpi$. The encryption that we use in our scheme is  the $\indcca$ secure McEliece encryption scheme with  the Fujisaki-Okamoto transformation introduced by Cayrel et al. \cite{CaHoPe12}. During the encryption, the sender adaptively uses  the random binary vector $\textit{\textbf{y}}$ as a random coin. The resulting ciphertext is denoted by $\textit{\textbf{c}}$.  The output is given by $E=(\textit{\textbf{e}}, \textit{\textbf{c}})$.

\begin{figure}[ht]
	\begin{center}
		\framebox{\parbox{13cm}{
\begin{minipage}[right]{13cm}

\noindent\textbf{\underline{$\encap$}}

\noindent\textit{Input:} ($\varpi$, $\tau$) with $\tau\in \mathbb{F}_2^{\ell}$
\\  \noindent\textit{Output:} An encapsulation of the internal state information $\varpi$.
\begin{enumerate} 
\item $\textit{\textbf{y}}\stackrel{\$}{\leftarrow} \mathbb{F}_{2}^{\kappa}$ with $\kappa= \left\lfloor\log_2\binom{n}{t}\right\rfloor$
\item Compute $\textit{\textbf{z}}=\mathcal{H}_1(\textit{\textbf{y}})$
\item Compute $\textit{\textbf{s}}:=\mathcal{H}_2(\tau \Vert \varpi \Vert \textit{\textbf{z}})$ 
\item Compute $\tilde{\textit{\textbf{e}}}:=\decode_{\mathbf{H}_{\sk,s}}(\textit{\textbf{s}}(\mathbf{S}^{-1})^T)$
\item Compute $\textit{\textbf{e}}:=\tilde{\textit{\textbf{e}}}\mathbf{P}$
\item Compute $\tau'=\mathcal{H}_1(\tau)$
\item Compute $\textit{\textbf{r}}:=\mathcal{H}_1(\tau'\Vert \varpi \Vert \textit{\textbf{y}})$ 
\item Compute $\textit{\textbf{c}}_0:=\textit{\textbf{r}}\mathbf{G}_{\pk,r} +\sigma$, where {$\sigma= \phi(\textbf{\textit{y}})$} with $\phi$ an constant weight encoding function.
\item Compute $\textit{\textbf{c}}_1:=\mathcal{H}_3(\sigma)\oplus (\tau '\Vert \varpi)$.
\item Parse $\textit{\textbf{c}}:=(\textit{\textbf{c}}_0\Vert \textit{\textbf{c}}_1)$  
\item Return $E:=({\textit{\textbf{e}}},\textit{\textbf{c}})$  
\end{enumerate}

\noindent\textbf{\underline{$\decap$}}

\noindent\textit{Input:} ($\sk_r$, $\mathbf{H}_{\pk, s}$, $E$, $\tau$)

\noindent\textit{Output:} Session key $K$
\begin{enumerate}
\item Parse $E$ as $ (\textit{\textbf{e}}, \textit{\textbf{c}})$.
\item Compute $( \textit{\textbf{x}}, \textit{\textbf{y}}):=\decrypt(\sk_r, \textit{\textbf{c}})$
\item Parse $\textit{\textbf{x}}$ as ($\tilde{\tau}\Vert \tilde{\varpi}$)
\item If $\textit{\textbf{e}}\mathbf{H}_{\pk,s}^T \neq \mathcal{H}_2(\tau\Vert \tilde{\varpi}  \Vert \mathcal{H}_1(\textit{\textbf{y}}))$ or $\tilde{\tau}\neq \mathcal{H}_1(\tau)$:
\item \hspace{1cm} Return $\perp$
\item Compute $K:=\mathcal{H}_0(\varpi)$
\item Return $K$.

\end{enumerate}
\end{minipage}
}}  
		\end{center}
		\caption{Description of the $\encap$ and $\decap$ algorithms.}
		\label{fig:Code-based-SCTKEM3}
	\end{figure}

In the decapsulation algorithm $\decap$, the receiver first performs recovery of the internal state information $\varpi$ by using the algorithm $\decrypt$ and the second part of the signature of $\textit{\textbf{m}}$. Then it verifies the signature and computes the session $K$ by using $\varpi$.

The algorithm $\decrypt$ that we use in the decapsulation algorithm of our scheme is described in Figure \ref{fig:Decrypt}. It is similar to that described in \cite{CaHoPe12} but we introduce some modifications which are:
\begin{enumerate}
    \item[•] we use an encoding function $\phi$
    \item[•] the output is not only the clear message $\textit{\textbf{m}}$, but a pair ($\textit{\textbf{m}}, \textit{\textbf{y}}$) where $\textit{\textbf{y}}$ is the reciprocal image the error vector $\sigma$ by the encoding function $\phi$
\end{enumerate}

\begin{figure}[ht]
	\begin{center}
		\framebox{\parbox{13cm}{
\begin{minipage}[right]{13cm}

\noindent\textbf{\underline{$\decrypt$}}

\noindent\textit{Input:} Secrete $\sk=(g_r, \Gamma_r, \mathbf{S}_r,  \mathbf{P}_r)$ the receiver and a ciphertext $\textit{\textbf{c}}$ . 
 
\noindent\textit{Output:} The pair ($\textit{\textbf{x}}, \textit{\textbf{y}}$), where $\textit{\textbf{x}}=\tau'\Vert \varpi$.
\begin{enumerate}
\item Parse $\textit{\textbf{c}}$ into ($\textit{\textbf{c}}_0,\textit{\textbf{c}}_1$) 
\item Compute $\sigma:=\gamma_{\text{Goppa}}^{\text{McE}}(\sk, {\textit{\textbf{c}}}_0)$, where $\gamma_{\text{Goppa}}^{\text{McE}}$ is a decoding algorithm for Goppa code.
\item $\textit{\textbf{y}}=\phi^{-1}(\sigma)$
\item Compute ${\textit{\textbf{x}}}= \textit{\textbf{c}}_1 \oplus\mathcal{H}_3(\sigma)$
\item Compute $\tilde{\textit{\textbf{r}}}=\mathcal{H}_1(\textit{\textbf{x}}\Vert \textit{\textbf{y}})$
\item If $\tilde{\textit{\textbf{r}}}\mathbf{G}_{\pk,r} \oplus \sigma \neq \textit{\textbf{c}}_1:$
\item \hspace{1cm} Return $\perp$
\item  Return $( \textit{\textbf{x}},  {\textit{\textbf{y}}})$
\end{enumerate}
\end{minipage}
}}  
		\end{center}
		\caption{Description of the $\sym$ algorithm.}
		\label{fig:Decrypt}
	\end{figure}

\subsubsection*{Completeness of our signcryption tag-$\kem$}

Let $\tau$ be a tag, ($\sk_s$, $\pk_s$) (resp. $\sk_r$ and $\pk_r$) be sender's  (resp. receiver's) key pair generated by the algorithm $\keygen$ with input $1^{\lambda}$. Let ($K$, $\varpi$):=$\sym$($\sk_s$, $\pk_r$) be a pair of a session key and an internal state information. Let $E:=$($\textit{\textbf{e}}, \textit{\textbf{c}}$) be an encapsulation of the internal state information $\varpi$.  Assuming that the encapsulation and decapsulation are performed by an honest user, we have:
\begin{itemize}
    \item The receiver can recover the pair ($\tau'\Vert \varpi,\textit{\textbf{y}}) $  from $\textit{\textbf{c}}$ and verify successfully  that $$\textit{\textbf{e}}\mathbf{H}_{\pk, s}^T= \mathcal{H}_2(\tau\Vert \varpi \vert \textit{\textbf{y}})\ \ and \ \ \tau '=\mathcal{H}_1(\tau)$$
    Otherwise, the receiver performs a successful signature verification of message $\textit{\textbf{m}}:=\tau\Vert \varpi$ signed by an honest user using the dual version of mCFS signature.
    \item Therefore it can compute the session key $K:=\mathcal{H}_0(\varpi)$.
\end{itemize}

\subsection{Code-based hybrid signcryption}\label{HtagKEM}

Here we use the  signcryption tag-$\kem$ described in Section \ref{tagKEM} for designing a code-based hybrid signcryption.  For the data encapsulation, we propose the use of a regular OT-secure symmetric encryption scheme. We denote the symmetric encryption algorithm being used by $\symencrypt$ and the symmetric decryption algorithm by $\symdecrypt$.

Figure \ref{fig:Code-based Hybrid signcryption from SCTKEM and DEM} gives the design of our code-based hybrid signcryption tag-$\kem$+$\dem$. In this design, algorithms $\setup$, $\keygen_s$ and  $\keygen_r$ are the same as those of our signcrytion tag-$\kem$. Algorithms $\sym$ and $\encap$ are those of our signcryption tag-$\kem$ in Section \ref{tagKEM}. 

\begin{figure}[ht]
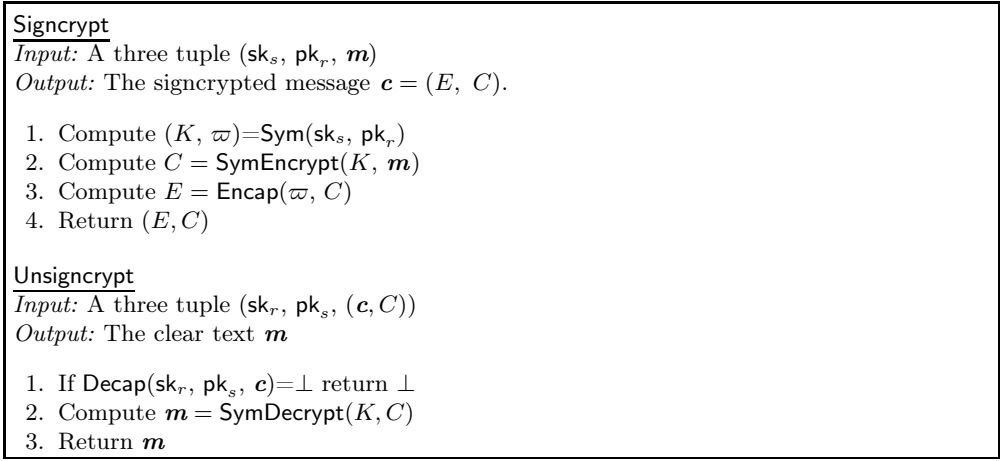

	\begin{center}
		\framebox{\parbox{13cm}{
\vspace{0.2em}

\begin{minipage}[top]{13cm}

\noindent \underline{$\signcrypt$}

\noindent\textit{Input:} A three tuple ($\sk_s$, $\pk_r$, $\textit{\textbf{m}}$)
\\ \textit{Output:} The signcrypted message $\textit{\textbf{c}}=(E,\ C)$. 
\begin{enumerate}
\item Compute ($K$, $\varpi$)=$\sym$($\sk_s$, $\pk_r)$
\item Compute $C=\symencrypt$($K$, \textit{\textbf{m}})
\item Compute $E=\encap$($\varpi$, $C$)
\item Return $(E, C)$
\end{enumerate}

\vspace{0.2em}
\noindent{\underline{$\unsigncrypt$}}

\noindent\textit{Input:} A three tuple ($\sk_r$, $\pk_s$, $(\textit{\textbf{c}}, C)$)
\\ \textit{Output:} The clear text $\textit{\textbf{m}}$ 
\begin{enumerate}

\item If $\decap$($\sk_r$, $\pk_s$, $\textit{\textbf{c}}$)=$\perp$ return $\perp$
\item Compute $\textit{\textbf{m}}=\symdecrypt$($K, C$)
\item Return $\textit{\textbf{m}}$
\end{enumerate}
\end{minipage}
}}  
		\end{center}
		\caption{Code-based hybrid signcryption from $sctkem$ and $\dem$.}
		\label{fig:Code-based Hybrid signcryption from SCTKEM and DEM}
	\end{figure}

\section{Security analysis}\label{Sec:Analysis}

Before discussing the security of our hybrid scheme, let us consider the following assumptions for our security analysis:

\vspace{0.5em}

\noindent{\textit{Assumption 1}}: The advantage of probabilistic polynomial-time algorithm $\mathcal{A}$ to solve the decoding random linear codes problem is negligible with respect to the length $n$ and dimension $k$ of the code.

\vspace{0.5em}

\noindent{\textit{Assumption 2}}:
 The advantage of probabilistic polynomial-time algorithm $\mathcal{A}$ to solve the ($U, U+V$) distinguishing problem is negligible with respect to the length $n$ and dimension $k$ of the code.

\vspace{0.5em}

\noindent{\textit{Assumption 3}}:
The advantage of probabilistic polynomial-time algorithm $\mathcal{A}$ to solve the subcode equivalence problem is negligible with respect to the length $n$ and dimension $k$ of the code.

\vspace{0.5em}
\noindent{\textit{Assumption 4}}:
The advantage of probabilistic polynomial-time algorithm $\mathcal{A}$ to solve the decoding one out of many (DOOM) problem is negligible with respect to the length $n$ and dimension $k$ of the code.

\noindent{{\textit{Assumption 5}}:
The advantage of probabilistic polynomial-time algorithm $\mathcal{A}$ to solve the Goppa code distinguishing problem is negligible with respect to the length $n$ and dimension $k$ of the code.}

\subsection{Information-set decoding algorithm}

In code-based cryptography, the best-known non-structural attacks rely on information-set decoding. The information-set decoding algorithm was introduced by Prange \cite{Prange62} for decoding cyclic codes. After the publication of Prange's work, there have been several works studying to invert code-based encryption schemes based on information-set decoding (see \cite{TeamMcEClassic} Section 4.1).

For a given linear code of length $n$ and dimension $k$, the main idea behind the information-set decoding algorithm is to find a set of $k$ coordinates of a garbled vector that are error-free and such that the restriction of the code’s generator matrix to these positions is invertible. Then, the original message can be computed by multiplying the encrypted vector by the inverse of the submatrix.

Thus, those $k$ bits determine the codeword uniquely, and hence the set is called an information set. It is sometimes difficult to draw the exact resistance to this type of attack. However, they are always lower-bounded by the ratio  of information sets without errors to total possible information sets, i.e.,
\begin{equation}
    R_{\text{ISD}}=\frac{\binom{n-\omega}{k}}{\binom{n}{k}},
\end{equation}
where $\omega$ is the Hamming weight of the error vector. Therefore, well-chosen parameters can avoid these non-structural attacks. In our scheme, we use the parameters of the Wave signature \cite{DeSeTi18} for the sender and those of  Classic McEliece \cite{TeamMcEClassic} for the receiver in the underlying encryption scheme.

\subsection{Key recovery attack}

In code-based cryptography, usually, the first step in the key recovering attack is to perform a distinguishing attack on the public code  in order to identify the family of the underlying code. Once successful, the attacker can then perform any well-known attack against this family of underlying codes to recover the secret key. When the underlying code is a Goppa code, the main {distinguishing} attack technique consists of evaluating the square code or the square of the trace code of the corresponding public code \cite{FaGaOtPeTi13, MaPe12, MoTi22}. { Note that this technique usually works for a Goppa code with a high rate. Compared to many other code-based encryption schemes, in which the public code is equivalent to an alternant or a Goppa code, in this work the public code is a permuted Goppa subcode. Thus, in addition to the indistinguishability of Goppa codes, the subcode equivalence problem becomes one of our security assumptions. Moreover, to the best of our knowledge, there is no attack reported in the literature on distinguishing a code equivalent to a Goppa subcode. Therefore, by using the subcode equivalence problem as a security assumption, we can keep our scheme out of the purview of the distinguishing attack even though the underlying code is a Goppa code.}

{Throughout the rest of our analysis, we assume that the attacker knows that the family of the underlying code is a Goppa code. In our case, the key recovery attack is at two different levels: the first one is on the sender side, and the second one is on the receiver side.}

On the receiver side, the key recovery attack consists of the recovery of the Goppa polynomial $g_r$ and the support $\gamma_r=(\alpha_0, ...,\alpha_{n-1})$ from the public matrix. Therefore, the natural way for this is to perform a brute-force  attack:  one can determine the sequence $(\alpha_0,..., \alpha_{n-1})$ from $g_r$ and the set $\lbrace \alpha_0,..., \alpha_{n-1}\rbrace$, or alternatively determine $g_r$ from $(\alpha_0,...,\alpha_{n-1})$. A good choice of parameters can avoid this attack for the irreducible Goppa code the number of choices of $g_r$ is given by
$$\frac{1}{t}\sum \limits_{d|t}\mu(d)q^{\frac{t}{d}}.$$

By using the parameters of  Classic McEliece, we can see that the complexity for performing a brute-force attack to find Goppa polynomial is more than  $2^{800}$ for the parameters proposed in \cite{TeamMcEClassic}.

{It is also important to note that if the adversary has the knowledge of the underlying Goppa code $\mathcal{C}_{\sk}$, performing the key recovery attack implies solving a computational instance of a subcode equivalence problem. Indeed, this corresponds to finding the permutation $\sigma$ such that $\sigma(\mathcal{C}_{\pk})$ is a subcode of $\mathcal{C}_{\sk}$. We can see that finding the permutation $\sigma$ is equivalent to solving the following system:}

\begin{equation}\label{2}
    \mathbf{G}_{\pk, r}\mathbf{X}_{\sigma}\mathbf{H}_{\sk, r}=\mathbf{0}
\end{equation}
{where $\mathbf{H}_{\sk, r}$ is a parity-check matrix of the underlying Goppa code $\mathcal{C}_{\sk, r}$, $\mathbf{G}_{\sk, r}$ is the generator matrix of the public code $\mathcal{C}_{\pk}$ and $\mathbf{X}_{\sigma}=(x_{i,j})$ is the matrix of the unknown permutation $\sigma$. Note that solving (\ref{2}) is equivalent to solving a variant of permuted kernel problem  \cite{LaPa12}. A natural way to solve (\ref{2}) is to use the brute force attack and such an attack is of order $\mathcal{O}(n!)$. However, the adversary could use  Georgiades' technique \cite{Geo92} where its complexity is given in our case by}
\begin{equation}\label{3}
    \mathcal{O}\left( \frac{n!}{\tilde{k}!}\right).
\end{equation}

{Recently Paiva and Terada  introduced in \cite{PaTe21} a new technique for solving (\ref{2}). The workfactor of their attack applied to our scheme is given by:}
\begin{equation}\label{4}
 \wf_{\attack_{\pate}}=\mathcal{O}\left( 2^{(n-mt-\tilde{k}n^{-1/5})(\lceil \log(n)\rceil-1)-0.91n+\frac{\log n}{2}} \right) 
\end{equation}

{From (\ref{3}) and (\ref{4}), we can see that a well-chosen set of parameters can avoid the attack of Georgiades as well as that of Paiva and Terada.}
 

In the case of the sender, the key recovery attack consists of first solving the ($U, U+V$) distinguishing problem for finite fields of cardinality $q=3$. Therefore under Assumption 3 and with a well-chosen set of parameters, this attack would fail.

\subsection{IND-CCA and SUF-CMA security}

In code-based cryptography, the main approach to a chosen-ciphertext attack against the McEliece encryption scheme consists of adding two errors to the received word. If the decryption succeeds, it means that the error vector in the resulting word has the same weight as the previous one. In our signcryption tag-$\kem$ scheme, this implies either recovering the session key  $K$ or distinguishing encapsulation of two different session keys  from $(\textit{\textbf{e}}, \textit{\textbf{c}}, \tau)$. We see that the recovery of the session key $K$ corresponds to the recovery of plaintext in  a $\indcca$ secure version of {McEliece}'s cryptosystem (see \cite{CaHoPe12} Subsection 3.2).  We now have the following theorem:

{\begin{theorem}\label{Secuproof1}
Under Assumptions 1, 3, and 5, the signcryption {tag}-$\kem$ scheme described in Subsection \ref{tagKEM} is $\indcca$ secure.
\end{theorem}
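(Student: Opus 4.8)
The plan is to argue by a short sequence of games $G_0,G_1,G_2,G_3,G_4$, starting from the $\indcca$ game of Figure~\ref{CCA2game} instantiated with the scheme of Subsection~\ref{tagKEM}; write $S_i$ for the event $b'=b$ in $G_i$, so that the goal is to show $|\pr[S_0]-1/2|$ negligible. In every game the oracles $\mathcal{O}_{\sym},\mathcal{O}_{\encap},\mathcal{O}_{\decap}$ are answered honestly, with $\mathcal{O}_{\decap}$ realized --- following the Fujisaki--Okamoto analysis of \cite{CaHoPe12} --- by first running the public Wave verification on the signature part $(\textit{\textbf{e}},\tau,\pk)$ and then the re-encryption check built into $\decrypt$; this is what lets the reductions below simulate decapsulation without holding $\sk_r$. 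Recall that the challenge encapsulation is $E^{\ast}=(\textit{\textbf{e}}^{\ast},\textit{\textbf{c}}^{\ast})$ with $\textit{\textbf{c}}^{\ast}=(\textit{\textbf{c}}_0^{\ast}\Vert\textit{\textbf{c}}_1^{\ast})$, $\textit{\textbf{c}}_0^{\ast}=\textit{\textbf{r}}^{\ast}\mathbf{G}_{\pk,r}+\phi(\textit{\textbf{y}}^{\ast})$, $\textit{\textbf{c}}_1^{\ast}=\mathcal{H}_3(\phi(\textit{\textbf{y}}^{\ast}))\oplus(\mathcal{H}_1(\tau^{\ast})\Vert\varpi^{\ast})$, and $K_1=\mathcal{H}_0(\varpi^{\ast})$.

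In the first hop $G_0\to G_1$ I would replace the receiver's public code --- a random $\tilde k$-dimensional subcode of a permuted binary Goppa code --- by a uniformly random $\tilde k$-dimensional binary code of length $n_r$; any noticeable change of the adversary's behaviour then yields either a distinguisher for binary Goppa codes (contradicting Assumption~5) or an efficient recovery of the support/polynomial equivalence defining $\sk_r$ from the public subcode, i.e.\ a solver for the pertinent instance of the subcode equivalence problem (contradicting Assumption~3). In $G_1\to G_2$, now that $\mathbf{G}_{\pk,r}$ is statistically close to a uniform code, $\textit{\textbf{c}}_0^{\ast}$ is a uniform instance of the general decoding problem (Problem~\ref{GBD}) with a constant-weight error of weight $t$, so under Assumption~1 it may be replaced by a uniformly random vector of $\mathbb{F}_2^{n_r}$; after this $\sigma^{\ast}=\phi(\textit{\textbf{y}}^{\ast})$, hence $\textit{\textbf{y}}^{\ast}$, is computationally unpredictable from the adversary's view. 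In $G_2\to G_3$ this unpredictability lets me replace $\mathcal{H}_3(\sigma^{\ast})$ by a fresh uniform string, whence $\textit{\textbf{c}}_1^{\ast}$ is uniform and carries no information about $\varpi^{\ast}$; the gap is the probability that the adversary ever evaluates $\mathcal{H}_3$ at $\sigma^{\ast}$. (Hops $G_0$ through $G_3$ are exactly the statement that the modified Fujisaki--Okamoto McEliece ciphertext $\textit{\textbf{c}}^{\ast}$ is $\indcca$-secure, so one may alternatively just cite \cite{CaHoPe12} strengthened by Assumption~3 here.) Finally, in $G_3\to G_4$ the only remaining occurrences of $\varpi^{\ast}$ in the view are inside the hash call producing the signature, $\mathcal{H}_2(\tau^{\ast}\Vert\varpi^{\ast}\Vert\mathcal{H}_1(\textit{\textbf{y}}^{\ast}))$, and inside $\mathcal{H}_0(\varpi^{\ast})=K_1$; a hybrid over these two evaluations shows $\varpi^{\ast}$ stays hidden, so $\textit{\textbf{e}}^{\ast}$ is distributed like a random Wave signature independent of $\varpi^{\ast}$ and $K_1$ may be replaced by an independent uniform key. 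Then $K_0$ and $K_1$ are identically distributed, so $\pr[S_4]=1/2$, and collecting the bounds gives $|\pr[S_0]-1/2|$ at most the sum of the three advantages declared negligible by Assumptions~1, 3 and~5, plus a negligible additive term.

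The step I expect to be the main obstacle is the consistency of $\mathcal{O}_{\decap}$ on encapsulations that share a component with the challenge, since $(E^{\ast},\tau^{\ast})$ is the only query forbidden by the rules of Figure~\ref{CCA2game}. A query $(E',\tau')$ with ciphertext part $\textit{\textbf{c}}'\neq\textit{\textbf{c}}^{\ast}$ is handled verbatim by the re-encryption check; a query with $\textit{\textbf{c}}'=\textit{\textbf{c}}^{\ast}$ but $\tau'\neq\tau^{\ast}$ is rejected because the tag hash $\mathcal{H}_1(\tau^{\ast})$ recovered from $\textit{\textbf{c}}^{\ast}$ cannot equal $\mathcal{H}_1(\tau')$ except on a collision of $\mathcal{H}_1$; the residual case $\textit{\textbf{c}}'=\textit{\textbf{c}}^{\ast}$, $\tau'=\tau^{\ast}$, $\textit{\textbf{e}}'\neq\textit{\textbf{e}}^{\ast}$ forces $\textit{\textbf{e}}'$ to be a second weight-$\omega$ vector whose syndrome under the queried public matrix equals the fixed value $\mathcal{H}_2(\tau^{\ast}\Vert\varpi^{\ast}\Vert\mathcal{H}_1(\textit{\textbf{y}}^{\ast}))$. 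Showing that such a query is nonetheless answered consistently --- still $\perp$ or still $K_1$ --- without help from the challenger is where the careful work lies, and a rigorous write-up must appeal here to the collision resistance of $\mathcal{H}_2$ together with the weight constraint enforced by Wave verification, or else add strong unforgeability of the underlying Wave instance to the hypotheses.
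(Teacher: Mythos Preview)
Your game-hopping skeleton is sound and uses the same three assumptions, but the decomposition differs from the paper's in two visible ways. First, the ordering: the paper front-loads the signature component (its Games~1--2 deal with the hash oracle and with replacing the Wave vector $\textit{\textbf{e}}$ by a random one, bounded via $q_h/\binom{n}{t}$ and $\epsilon_{\sd}$), then peels off the code structure (Game~3: Goppa distinguishing, Assumption~5; Game~4: subcode equivalence, Assumption~3). You do the opposite, dissolving the code structure first and the signature/key last. Second, and more substantively, the paper does not unroll the Fujisaki--Okamoto argument as you do in $G_1\!\to\!G_3$; instead its final Game~4 is an explicit black-box reduction to the $\indcca$ game of the underlying McEliece PKE (the $\pkegame$ of the appendix), forwarding $(\mathcal{H}_1(\tau)\Vert\varpi_0,\mathcal{H}_1(\tau)\Vert\varpi_1)$ to the PKE challenger and translating $\mathcal{A}_{\cca}$'s decapsulation queries into PKE decryption queries. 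What the paper's route buys is modularity --- it inherits the FO analysis of \cite{CaHoPe12} wholesale --- whereas your direct argument is self-contained and makes the role of each assumption more transparent.

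Your final paragraph on decapsulation consistency for queries sharing a component with the challenge is more explicit than anything in the paper's proof; the paper only treats the $\mathcal{H}_1$-collision case (same $\textit{\textbf{c}}$, different $\tau$) and otherwise defers to the PKE decryption oracle. The residual case you flag --- same $(\textit{\textbf{c}}^{\ast},\tau^{\ast})$ but a fresh valid $\textit{\textbf{e}}'$ --- is genuinely delicate in the insider model since $\mathcal{A}_{\cca}$ holds $\sk_s$, and the paper's reduction sidesteps it because in its Game~2 the challenge $\textit{\textbf{e}}$ is already random and in Game~4 the PKE oracle sees only $\textit{\textbf{c}}$. Your instinct that this needs either the weight/syndrome rigidity of Wave or an extra unforgeability hypothesis is reasonable; the paper simply does not confront it head-on.
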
}

\begin{proof}
Let $\mathcal{A}_{\cca}$ be a PPT adversary against the signcryption tag-$\kem$ scheme described in Subsection \ref{tagKEM} in the signcryption tag-$\kem$ $\indcca$ game. Let us denote its advantage by $\epsilon_{\cca,\sctkem}$. For proving Theorem \ref{Secuproof1} we need to bound $\epsilon_{\cca,\sctkem}$.

\vspace{0.5em}
\noindent{\textbf{Game 0:} This game is the normal signcryption tag-$\kem$ $\indcca$ game. Let us denote by $X_0$ the event that the adversary wins Game 0 and $\pr (X_0)$ the probability that it happens. Then we have
$$
\pr (X_0)=\epsilon_{\cca,\sctkem}
$$}

\noindent{{\textbf{Game 1:}} This game corresponds to the simulation of  the hash function oracle. Indeed it is {the same as} Game 0 except that adversary can have access to the hash function oracle: It looks for some pair $(\tau^{*}, \textit{\textbf{y}}^{*})\in \mathbb{F}_2^{\lambda}\times \mathbb{F}_2^{\kappa}$ such that $\textit{\textbf{e}}\mathbf{H}_s^T=\mathcal{H}_2(\tau^{*}\Vert \varpi \Vert  \mathcal{H}_1(\textit{\textbf{y}}^{*}))$. Then, it tries to continue by computing  $\textit{\textbf{c}}'$. We can see that it could succeed at least when the following collisions happen: 
$$\mathcal{H}_1(\tau^{*})=\mathcal{H}_1(\tau)\ \ \text{and} \ \ \mathcal{H}_1(\tau^{*}\Vert \varpi\Vert \mathcal{H}_2(\textit{\textbf{y}}^{*}))=\mathcal{H}_2(\tau \Vert \varpi \Vert \mathcal{H}_1(\textit{\textbf{y}}))$$
Therefore, if $q_h$ is the number of queries allowed and $X_1$ the event that $\mathcal{A}_{\cca}$ wins game $X_1$, then we have:
$$
|\pr (X_1)-\pr (X_0)| \leq \frac{q_h}{\binom{n}{t}}
$$
\noindent{\textbf{Game 2:}} This game is the same as Game 1 except that the error vector  $\textit{\textbf{e}}$ in the encapsulation output is generated randomly. We can see that the best to proceed is to split $\textit{\textbf{c}}$ as $(\textit{\textbf{c}}_0\Vert \textit{\textbf{c}}_1)$ and then try to invert either $\textit{\textbf{c}}_0$ for recovering the error $\sigma$ or $\textit{\textbf{c}}_1$ for recovering directly the internal state $\varpi_b$. That means that the adversary is able either to solve the syndrome decoding problem or to invert a one-time pad function. Therefore we have:
$$
|\pr (X_1)-\pr (X_2)|\leq \epsilon_{\sd}+\nu(\ell)
$$
where $\epsilon_{\sd}$ is the advantage of an adversary against the syndrome decoding problem, $\nu$ is a negligible function, and $\ell$ is the bit length of the symmetric encryption. }

\noindent{{\textbf{Game 3:}} This game is the same as Game 2. However, the change is in the key generation algorithm. Indeed, a random code is chosen as the underlying code instead of Goppa. We can see that this change is indistinguishable. In fact, distinguishing this change corresponds to solving in part the Goppa code distinguishing problem. Thus, we have
$$
|\pr (X_3)-\pr (X_2)| \leq \epsilon_{GCD}(\lambda)
$$
where $\epsilon_{GCD}(\lambda)$ is the advantage of a PPT adversary in the Goppa code distinguishing problem and   $\lambda$ the security parameter. If there is a PPT adversary $\mathcal{A}$ capable of distinguishing this change, we can use it to construct an adversary $\mathcal{A}_{GCD}$ to solve the Goppa code distinguishing problem as follows: 
\begin{enumerate}
    \item Once receiving an instance $\mathbf{G}\in \mathbb{F}_2^{k\times n}$ of a generator matrix of a code $\mathcal{C}$ in Goppa code distinguishing problem, $\mathcal{A}_{GCD}$ extracts a generator matrix $\mathbf{G}'$ of a subcode $\mathcal{C}'$ of $\mathcal{C}$ and forward it to $\mathcal{A}$. 
    \item $\mathcal{A}$ will reply by $1$ if the change has happened, i.e.,  the underlying code is not a Goppa code. It will reply by 0 otherwise.
    \item If $\mathcal{A}_{GCD}$ receives $1$ from $\mathcal{A}$, it means that $\mathcal{C}$ is not a Goppa code and $\mathcal{A}_{GCD}$ outputs $0$, otherwise it returns $1$, i.e, $\mathcal{C}$ is a Goppa code.
\end{enumerate}}

\noindent{\textbf{Game 4:} This game  is the same as Game 3 except that the public key is a random matrix instead of a generator matrix of a permuted subcode. We can see that this change is indistinguishable according to the subcode equivalence assumption. Thus we have:
$$
|\pr (X_4)-\pr (X_3)| \leq \epsilon_{ES}(\lambda)
$$
where $\epsilon_{ES}(\lambda)$ is the advantage of a PPT adversary in the subcode equivalence problem and $\lambda$ is the security parameter. Moreover, we can show that if an adversary $\mathcal{A}_{\cca}$ wins this game, we can use  it to construct an adversary $\mathcal{A}_{\text{McE}}$  for attacking the underlying  McEliece scheme in the public key encryption $\indcca$ game (called $\pkegame$ in Appendix \ref{pkegame}). For more details on the underlying McEliece encryption scheme and its $\indcca$ security proof, the reader is referred to Appendix \ref{McElie}. We now proceed as follows:
\begin{enumerate}
    \item[•] Given the receiver public key $\pk$ which corresponds to a receiver public key signcryption tag-$\kem$, $\mathcal{A}_{\text{McE}}$ does the following:
    \begin{itemize}
        \item[$\star$] chooses randomly $(\varpi_0, \varpi_1)\stackrel{\$}{\leftarrow}\mathbb{F}_2^{\ell}$
        \item[$\star$] chooses randomly $\delta \stackrel{\$}{\leftarrow} \{0,1\}$
        \item[$\star$] sends the public key $\pk$ and $\varpi_{\delta}$ to $\mathcal{A}_{\cca}$
    \end{itemize}
    \item[•] Given a tag $\tau$ from $\mathcal{A}_{\cca}$,  $\mathcal{A}_{\text{McE}}$:
    \begin{itemize}
        \item[$\star$]  sends the pair ($\mathcal{H}_1(\tau)\Vert \varpi_0$,$\mathcal{H}_1(\tau)\Vert \varpi_1$) to the encryption oracle of $\pkegame$
        \item[$\star$] forwards $\textit{\textbf{c}}$ received from the encryption oracle to $\mathcal{A}_{\cca}$
    \end{itemize}
    \item[•] For every decryption query ($\textit{\textbf{c}}_i$, $\tau_i$) from $\mathcal{A}_{\cca}$:
    \begin{itemize}
        \item[$\star$]  if $\textit{\textbf{c}}_i=\textit{\textbf{c}}$, $\mathcal{A}_{\text{McE}}$ return $\perp$ to $\mathcal{A}_{\cca}$, otherwise it sends $\textit{\textbf{c}}_i$ to the decryption oracle of $\pkegame$. 
        \item[$\star$] Receiving $\tau_i'\Vert \varpi_i$ from the decryption oracle:
        \begin{itemize}
        \item[$\triangleright$]  if $\tau_i'\neq \mathcal{H}_1(\tau _i)$, $\mathcal{A}_{\text{McE}}$ returns $\perp$ to $\mathcal{A}_{\cca}$, otherwise, it returns $\varpi_i$ to $\mathcal{A}_{\cca}$
    \end{itemize}
    \end{itemize}
    \item[•] When $\mathcal{A}_{\cca}$ outputs $\tilde{\delta}=\delta$, $\mathcal{A}_{\text{McE}}$ returns 1, otherwise, it returns 0.
\end{enumerate}}

{Let $\epsilon_{\text{PKE}}$ be the advantage of $\mathcal{A}_{\text{McE}}$ in the $\pkegame$. Note that the target ciphertext $\textit{\textbf{c}}$ can be uniquely decrypted to $\mathcal{H}_1(\tau))\Vert \varpi_{\delta}$. Therefore any $(\textit{\textbf{c}}, \tau')$ other than $(\textit{\textbf{c}}, \tau)$ cannot be a valid signcryption ciphertext unless collusion of $\mathcal{H}_1$ takes place, i.e., $\mathcal{H}_1(\tau_i)=\mathcal{H}_1(\tau)$. The correct answer to any decryption query with $\textit{\textbf{c}}_i=\textit{\textbf{c}}$ is $\perp$. Decryption queries from $\mathcal{A}_{\cca}$ are correctly answered since $\textit{\textbf{c}}_i$ is decrypted by the decryption oracle of $\pkegame$.}

{When $\mathcal{A}_{\cca}$ outputs $\tilde{\delta}$, it means that $\varpi_{\delta}$ is embedded in $\textit{\textbf{c}}_i$ otherwise $\varpi_{1-\delta}$ is embedded. It means that the adversary $\mathcal{A}_{\text{McE}}$ wins game $\pkegame$ with the same probability as $\mathcal{A}_{\cca}$ wins Game 4 when collision of $\mathcal{H}_1$  has  happened. Let $\tilde{X}$ be the event collision of $\mathcal{H}_1$ has  happened and $\tilde{X}_4$ the event $\mathcal{A}_{\text{McE}}$ wins the $\pkegame$. Let us denote by $\epsilon_{pke}$ the probability of the event $\tilde{X}_4$ and $\epsilon_{col}$ that of  $\tilde{X}$. Therefore we have:
$$
\pr (X_4|\tilde{X})=\pr (\tilde{X}_4)\Longrightarrow \pr (X_4)\leq \pr (\tilde{X}_4)+\pr (\tilde{X})
$$
By putting it  all together, we conclude our proof.}
\end{proof}
\begin{theorem}\label{Secuproof11}
Under Assumptions 2 and 4, the signcryption tag-$\kem$ scheme described in Subsection \ref{tagKEM} is $\sufcma$ secure.
\end{theorem}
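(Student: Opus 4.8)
\emph{Proof proposal.} The plan is to reduce the $\sufcma$ security of the signcryption tag-$\kem$ to the unforgeability of the underlying Wave signature, and hence to Assumptions~2 and~4, through a short sequence of games. The starting point is to pin down what a winning forgery must contain: if $\mathcal{F}_{\cma}$ outputs $(E^{*},\tau^{*},\sk_r^{*})$ with $E^{*}=(\textit{\textbf{e}}^{*},\textit{\textbf{c}}^{*})$ and $(\textit{\textbf{x}}^{*},\textit{\textbf{y}}^{*}):=\decrypt(\sk_r^{*},\textit{\textbf{c}}^{*})$, $\textit{\textbf{x}}^{*}=\tilde{\tau}^{*}\Vert\tilde{\varpi}^{*}$, then $\decap$ accepts only if $\textit{\textbf{e}}^{*}\mathbf{H}_{\pk,s}^{T}=\mathcal{H}_2(\tau^{*}\Vert\tilde{\varpi}^{*}\Vert\mathcal{H}_1(\textit{\textbf{y}}^{*}))$ and $\tilde{\tau}^{*}=\mathcal{H}_1(\tau^{*})$. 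Since $\mathcal{F}_{\cma}$ supplies $\sk_r^{*}$ itself, it is free to choose $\tilde{\varpi}^{*}$ and $\textit{\textbf{y}}^{*}$, so the only genuinely hard ingredient of a forgery is a weight-$\omega$ vector $\textit{\textbf{e}}^{*}$ hitting the prescribed syndrome, i.e. a Wave signature, relative to $\mathbf{H}_{\pk,s}$, on a self-chosen message.

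Game~0 is the real $\sufcma$ game and $\epsilon_{\cma}$ denotes the forger's advantage. In Game~1 I would replace the trapdoor-based signing inside $\mathcal{O}_{\encap}$ by the following simulation: on a query $\tau$, sample $\textit{\textbf{y}}$, draw $\textit{\textbf{e}}\stackrel{\$}{\leftarrow}\mathcal{W}_{3,n_s,\omega}$, program $\mathcal{H}_2(\tau\Vert\varpi\Vert\mathcal{H}_1(\textit{\textbf{y}})):=\textit{\textbf{e}}\mathbf{H}_{\pk,s}^{T}$ (programming $\mathcal{H}_1,\mathcal{H}_3$ on fresh points as needed), finish the McEliece part honestly, and return $E=(\textit{\textbf{e}},\textit{\textbf{c}})$; $\mathcal{O}_{\sym}$ is run honestly and $\mathcal{O}_{\decap}$ is answered with an honestly generated receiver key pair. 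The gap $|\pr(X_1)-\pr(X_0)|$ is then the statistical distance between a genuine Wave signature and a uniform weight-$\omega$ vector with matching syndrome --- negligible by the rejection-sampling property of Wave \cite{DeSeTi18} --- together with an $\mathcal{O}\!\left(q_h/\binom{n_r}{t}\right)$-type term for programming clashes. In Game~2 the public matrix $\mathbf{H}_{\pk,s}=\mathbf{S}_s\mathbf{H}_{\sk,s}\mathbf{P}_s$ is swapped for a uniformly random full-rank matrix over $\mathbb{F}_3$; this is now sound because no $(U,U+V)$-trapdoor is used anywhere, and $|\pr(X_2)-\pr(X_1)|\le\epsilon_{UV}(\lambda)$ by Assumption~2.

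The final step reduces a Game~2 forgery to DOOM: take $\mathbf{H}_{\pk,s}$ to be the DOOM matrix and answer every $\mathcal{H}_2$-query not already used for an $\mathcal{O}_{\encap}$ answer by one of the target syndromes $\textit{\textbf{s}}_1,\dots,\textit{\textbf{s}}_N$. If the forged message $\tau^{*}\Vert\tilde{\varpi}^{*}\Vert\mathcal{H}_1(\textit{\textbf{y}}^{*})$ is fresh, then $\textit{\textbf{e}}^{*}$ together with the index of the syndrome planted there is a DOOM solution, contributing at most $\epsilon_{\text{DOOM}}(\lambda)$. If that message coincides with one used to answer some $\mathcal{O}_{\encap}(\tau_j)$ query, then $\tau^{*}=\tau_j$ (otherwise $\mathcal{H}_1(\tau^{*})=\mathcal{H}_1(\tau_j)$, a collision of negligible probability), $\tilde{\varpi}^{*}=\varpi_j$ and $\mathcal{H}_1(\textit{\textbf{y}}^{*})=\mathcal{H}_1(\textit{\textbf{y}}_j)$; since the Fujisaki--Okamoto-transformed ciphertext is a deterministic function of its plaintext and internal coin, this forces $\textit{\textbf{c}}^{*}=\textit{\textbf{c}}_j$ up to another $\mathcal{H}_1$-collision, so $E^{*}=(\textit{\textbf{e}}^{*},\textit{\textbf{c}}_j)$ with $\textit{\textbf{e}}^{*}\neq\textit{\textbf{e}}_j$, $\wt(\textit{\textbf{e}}^{*})=\omega$ and $\textit{\textbf{e}}^{*}\mathbf{H}_{\pk,s}^{T}=\textit{\textbf{e}}_j\mathbf{H}_{\pk,s}^{T}$, i.e. a second constant-weight preimage of a fixed syndrome of a random code, no easier than the decoding problems of Assumptions~1 and~4. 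Summing, $\epsilon_{\cma}\le\epsilon_{UV}(\lambda)+\epsilon_{\text{DOOM}}(\lambda)+\nu(\lambda)$ for a negligible $\nu$, which is negligible under Assumptions~2 and~4.

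The step I expect to be the main obstacle is this last, \emph{strong} part: excluding forgeries that recycle an oracle's McEliece ciphertext and only alter the Wave component $\textit{\textbf{e}}^{*}$. Wave was proven existentially, not strongly, unforgeable, so its theorem cannot be invoked as a black box; the argument has to lean on the determinism of the FO transformation (a changed $\textit{\textbf{e}}^{*}$ cannot be paired with an unchanged $\textit{\textbf{c}}$ unless $\tilde{\varpi}^{*}$ or $\textit{\textbf{y}}^{*}$ changes, which reopens the existential case already handled), on the collision resistance of $\mathcal{H}_1$, and on the claim that, for a random parity-check matrix and a Wave-range weight $\omega$, producing a \emph{second} weight-$\omega$ solution to a given syndrome is no easier than DOOM. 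Making that last claim into a tight reduction rather than an informal remark is the one place where extra care, or a mild additional assumption, may be needed.
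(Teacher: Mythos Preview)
Your proposal is correct in spirit but takes a markedly different route from the paper. The paper's argument is a short two-line reduction: it observes that a winning forgery must contain a pair $(\textit{\textbf{e}},\textit{\textbf{y}})$ with $\textit{\textbf{e}}\mathbf{H}_{\pk,s}^{T}=\mathcal{H}_2(\tau\Vert\varpi\Vert\textit{\textbf{y}})$, which it identifies with a Wave forgery, and that the forger must additionally exhibit a $\kappa$-bit preimage of $\textit{\textbf{y}}$ under $\mathcal{H}_1$; it then bounds $\epsilon_{\cma}\le\epsilon_{\text{Wave},\euf}+\epsilon_{\text{PreIm}}/2^{\kappa}$ and appeals to Wave's own $\eufcma$ theorem (which is where Assumptions~2 and~4 enter) plus target-preimage resistance of $\mathcal{H}_1$. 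There is no game sequence, no oracle simulation, and no explicit DOOM reduction in the paper --- Wave is used as a black box.

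What you do differently, and what each approach buys: you \emph{unfold} the Wave reduction inline, replacing the trapdoor with programmed signatures (Game~1), swapping $\mathbf{H}_{\pk,s}$ for a random matrix under Assumption~2 (Game~2), and then embedding DOOM syndromes in $\mathcal{H}_2$ (Assumption~4). This is more work but is also more self-contained and makes the dependence on the two assumptions explicit rather than inherited. More importantly, you confront the \emph{strong} unforgeability requirement head-on --- the case $E^{*}=(\textit{\textbf{e}}^{*},\textit{\textbf{c}}_j)$ with $\textit{\textbf{e}}^{*}\neq\textit{\textbf{e}}_j$ on the same syndrome --- whereas the paper's proof silently treats the problem as if existential unforgeability of Wave were enough and does not discuss this case at all. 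Your instinct that this is the delicate point is therefore well placed; the paper simply does not address it, so your extra paragraph is not redundant caution but an actual gap you are filling. Conversely, the paper's preimage argument (that the forger must, after fixing $\textit{\textbf{y}}$, invert $\mathcal{H}_1$) is something you do not use and, as you implicitly note, need not use, since the forger controls $\sk_r^{*}$ and can pick the McEliece randomness first.
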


\begin{proof}
Let $\mathcal{F}_{\cma}$ be an adversary against our signcryption tag-$\kem$ in the $\sufcma$ game and $\epsilon_{\cma}$ its advantage. 
For the forgery of our signcryption, adversary $\mathcal{F}_{\cma}$ needs to first find a pair $(\textit{\textbf{e}}, \textit{\textbf{y}})\in \mathcal{W}_{q, n,\omega}\times \mathbb{F}_2^{\tilde{k}}$ such that $\textit{\textbf{e}}\mathbf{H}_{\pk,s}^T=\mathcal{H}_2(\tau\Vert \varpi\Vert \textit{\textbf{y}})$. 
Then, it will try to find $\textit{\textbf{r}}\in \mathbb{F}_2^{\kappa}$ such that $\mathcal{H}_1(\textit{\textbf{r}})=\textit{\textbf{y}}$, i.e., it wins in the target pre-image free game (see Appendix \ref{PreImage}) against the cryptographic hash function $\mathcal{H}_1$. We can see that finding $(\textit{\textbf{e}}, \textit{\textbf{y}})\in \mathcal{W}_{q, n,\omega}\times \mathbb{F}_2^{\tilde{k}}$ such that $\textit{\textbf{e}}\mathbf{H}_{\pk,s}^T=\mathcal{H}_2(\tau\Vert \varpi\Vert \textit{\textbf{y}})$ corresponds to the forgery of the underlying Wave signature scheme. Let $\epsilon_{\text{PreIm}}$ be the advantage of an adversary in the pre-image free game against a cryptographic hash function. Let $\mathcal{A}_{\text{Wave},\cma}$ be an adversary against the Wave signature in the ${\eufcma}$ game and $\epsilon_{Wave, EUF}$ its advantage. Let $X$ be the event that $\mathcal{A}_{\text{Wave},\cma}$ wins. Let $\tilde{X}$ be the event that the adversary is able to find a pre-image $\textit{\textbf{x}}$ of $\textit{\textbf{y}}$ by $\mathcal{H}_1$ such that $\textit{\textbf{x}}\in \mathbb{F}_2^{\kappa}$. We have:

$$\pr (\mathcal{F}_{\cma}\  \mbox{wins})= \pr (X \ \mbox{and} \ \tilde{X})$$ 
$$\hspace{3cm}\leq \pr (X)+\pr (\tilde{X})$$
$$\hspace{3.8cm}\leq \epsilon_{\text{Wave}, \euf}+\dfrac{\epsilon_{\text{PreIm}}}{2^{\kappa}}$$

Note that due to the fact that $\mathcal{H}_1$ is a cryptographic hash function, $\epsilon_{\text{PreIm}}$ is negligible and  that concludes our proof.
\end{proof}

\begin{corollary}
The signcryption tag-$\kem$ described in  Subsection \ref{tagKEM} is secure.
\end{corollary}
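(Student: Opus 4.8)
The plan is to derive the corollary directly from the two security theorems already established, since the notion of security for a signcryption tag-$\kem$ was defined to be exactly the conjunction of $\indcca$ and $\sufcma$ security. So the argument is essentially a bookkeeping step: I would first invoke the definition of a secure signcryption tag-$\kem$, which requires both properties to hold simultaneously; then I would cite Theorem \ref{Secuproof1}, which shows that the scheme of Subsection \ref{tagKEM} is $\indcca$ secure under Assumptions 1, 3, and 5; and then Theorem \ref{Secuproof11}, which shows it is $\sufcma$ secure under Assumptions 2 and 4. Taking all five assumptions together, both halves of the definition are met, and the conclusion follows.

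Concretely, I would phrase it as: let $\mathcal{A}_{\cca}$ be any PPT distinguisher in the $\indcca$ game; by Theorem \ref{Secuproof1} its advantage $\epsilon_{\cca,\sctkem}$ is bounded by a sum of terms of the form $\epsilon_{\sd} + \epsilon_{GCD}(\lambda) + \epsilon_{ES}(\lambda) + \epsilon_{pke} + \epsilon_{col} + q_h/\binom{n}{t} + \nu(\ell)$, each of which is negligible in $\lambda$ under Assumptions 1, 3, 5 (together with the properties of the hash functions and the $\indcca$ security of the underlying McEliece PKE); hence $\epsilon_{\cca,\sctkem}$ is negligible. Symmetrically, let $\mathcal{F}_{\cma}$ be any PPT forger in the $\sufcma$ game; by Theorem \ref{Secuproof11} its advantage is bounded by $\epsilon_{\text{Wave},\euf} + \epsilon_{\text{PreIm}}/2^{\kappa}$, which is negligible under Assumptions 2 and 4 (the $\eufcma$ security of Wave) plus the pre-image resistance of $\mathcal{H}_1$. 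Since both advantages are negligible, the scheme satisfies Definition~1 and is therefore secure.

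I do not anticipate any real obstacle here, because the corollary is a pure composition of the preceding theorems through the definition; all the technical work (the game hops, the reductions to the subcode equivalence problem, to the $\indcca$ security of McEliece--FO, and to the $\eufcma$ security of Wave) has already been carried out in the proofs of Theorems \ref{Secuproof1} and \ref{Secuproof11}. The only point worth a sentence of care is to note that the two theorems rely on disjoint-looking but jointly consistent assumption sets, so that assuming Assumptions 1--5 simultaneously is what is needed, and to confirm that the completeness argument given earlier guarantees the scheme is also correct, so ``secure'' is not vacuous.

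\begin{proof}
By the definition of security for a signcryption tag-$\kem$, it suffices to show that the scheme of Subsection \ref{tagKEM} is both $\indcca$ secure and $\sufcma$ secure. The former is exactly the statement of Theorem \ref{Secuproof1}, which holds under Assumptions 1, 3, and 5; the latter is the statement of Theorem \ref{Secuproof11}, which holds under Assumptions 2 and 4. Hence, under Assumptions 1 through 5, the scheme is secure.
\end{proof}
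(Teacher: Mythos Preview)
Your proposal is correct and follows essentially the same approach as the paper: the paper's own justification is the single sentence ``The above corollary is a consequence of Theorems \ref{Secuproof1} and \ref{Secuproof11},'' and your argument is precisely this, with the added (and accurate) observation that Definition~1 reduces security to the conjunction of $\indcca$ and $\sufcma$. Your explicit enumeration of the advantage bounds and the remark about needing Assumptions 1--5 jointly are more detailed than the paper, but entirely in the same spirit.
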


The above corollary is a consequence of Theorems  \ref{Secuproof1} and  \ref{Secuproof11}. We then have the following.

{\begin{proposition}\label{Secuproof21}
Under Assumptions 1, 3, and 5, the hybrid signcryption tag-$\kem+\dem$ scheme described in Subsection \ref{HtagKEM} is $\indcca$.
\end{proposition}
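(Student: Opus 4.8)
The plan is to obtain the proposition as a black-box corollary of the generic composition theorem, Theorem~\ref{Genesec}, instantiated with the two components that make up the construction of Subsection~\ref{HtagKEM}. Those components are exactly the signcryption tag-$\kem$ of Subsection~\ref{tagKEM} (its $\setup$, $\keygen_s$, $\keygen_r$, $\sym$, $\encap$ are reused unchanged, and its $\decap$ is called inside $\unsigncrypt$) and the symmetric scheme $(\symencrypt,\symdecrypt)$, which is stipulated to be a one-time (OT) secure $\dem$. By Theorem~\ref{Secuproof1}, under Assumptions~1, 3 and 5 the signcryption tag-$\kem$ is $\indcca$ secure; together with the OT-security of the $\dem$ this meets both hypotheses of the first half of Theorem~\ref{Genesec}, and the hybrid scheme is therefore $\indcca$ secure. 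That is the whole argument at the level of composition, and no assumption beyond 1, 3, 5 is touched, since all of them are already absorbed into Theorem~\ref{Secuproof1}.

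If a self-contained reduction is wanted in place of the citation, I would run the short two-hop game sequence underlying the Bj{\o}rstad--Dent proof. Game~0 is the real hybrid $\indcca$ game against an adversary $\mathcal{A}$. In Game~1 the session key $K$ produced by $\sym(\sk_s,\pk_r)$ for the challenge is replaced by a uniform independent $K'\stackrel{\$}{\leftarrow}\mathcal{K}$, so that the challenge ciphertext is $(E^\star,C^\star)$ with $C^\star=\symencrypt(K',\textit{\textbf{m}}_b)$ and $E^\star=\encap(\varpi,C^\star)$, i.e. the tag passed to $\encap$ is the $\dem$ ciphertext $C^\star$ itself. A distinguisher between Game~0 and Game~1 yields an $\indcca$ adversary against the signcryption tag-$\kem$: the reduction receives $\pk_r$ and a (real-or-random) challenge key from the tag-$\kem$ challenger, answers each of $\mathcal{A}$'s signcryption and unsigncryption queries $(\cdot,C)$ by relaying to $\mathcal{O}_{\sym}$, $\mathcal{O}_{\encap}$, $\mathcal{O}_{\decap}$ with $C$ used as the tag, then forms $C^\star$ with the challenge key and submits $C^\star$ as its own tag to obtain $E^\star$. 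The prohibition on querying $\unsigncrypt$ at $(E^\star,C^\star)$ coincides with the tag-$\kem$ rule forbidding a $\mathcal{O}_{\decap}$ query on $(E^\star,C^\star)$, so the simulation is perfect and $|\pr(X_0)-\pr(X_1)|\le\epsilon_{\cca,\sctkem}$, which is negligible by Theorem~\ref{Secuproof1}. In Game~1 the key $K'$ feeding the $\dem$ is independent of everything $\mathcal{A}$ learns from $E^\star$, so $|\pr(X_1)-1/2|$ is at most the one-time advantage against $(\symencrypt,\symdecrypt)$, negligible by hypothesis. Summing the two bounds gives $\adv(\mathcal{A})$ negligible.

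The only delicate point --- and the place the argument would break if the scheme were specified slightly differently --- is the oracle bookkeeping of the reduction: one must confirm that every $\unsigncrypt$ call $(\textit{\textbf{c}},C)$ of $\mathcal{A}$ is faithfully answered by a single $\mathcal{O}_{\decap}(\pk_s,\textit{\textbf{c}},C)$ query followed by one $\symdecrypt$ with the returned key, that the ``erase-after-use'' semantics of $\mathcal{O}_{\encap}$ are consistent with there being a single challenge encapsulation, and that the tag consumed by $\sym$ and $\encap$ is the $\dem$ ciphertext on both sides. Once that is checked there is no further obstacle; the $\sufcma$ counterpart of the construction would likewise follow from the second half of Theorem~\ref{Genesec} together with Theorem~\ref{Secuproof11}.
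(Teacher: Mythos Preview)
Your proof is correct and, in its first paragraph, takes exactly the approach of the paper: invoke Theorem~\ref{Genesec} with the $\indcca$ security of the underlying tag-$\kem$ supplied by Theorem~\ref{Secuproof1} (which is where Assumptions~1, 3, 5 enter) and the assumed OT-security of the $\dem$. The paper's own proof stops there; your second and third paragraphs, which unpack the Bj{\o}rstad--Dent game hop and audit the oracle simulation, go beyond what the paper provides but are consistent with it and would serve well if one wished to make the reduction explicit rather than cite Theorem~\ref{Genesec} as a black box.
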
}
\begin{proof}$\ $

\noindent{Proposition \ref{Secuproof21} is a consequence of Theorem \ref{Genesec}. Indeed, under Assumptions 1, 3, and 5, the underlying signcryption tag-$\kem$ is $\indcca$ secure (see Theorem \ref{Secuproof1}). In addition, the symmetric encryption scheme used is OT-secure. Therefore, a direct application of  Theorem \ref{Genesec} allows us to achieve the proof.}
\end{proof}
{\begin{proposition}\label{Secuproof22}
Under Assumptions 2 and 4, the hybrid signcryption tag-$\kem+\dem$ scheme described in Subsection \ref{HtagKEM} is $\sufcma$ secure.
\end{proposition}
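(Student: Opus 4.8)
The plan is to derive this exactly as Proposition~\ref{Secuproof21} was derived, but on the $\sufcma$ side of Theorem~\ref{Genesec}. First I would observe that the hybrid scheme of Subsection~\ref{HtagKEM} reuses, without modification, the algorithms $\setup$, $\keygen_s$, $\keygen_r$, $\sym$ and $\encap$ of the signcryption tag-$\kem$ of Subsection~\ref{tagKEM}, and that its $\unsigncrypt$ routine begins by invoking $\decap$; hence $\hsc$ is literally the generic tag-$\kem$+$\dem$ composition to which Theorem~\ref{Genesec} applies. Second, by Theorem~\ref{Secuproof11}, under Assumptions~2 and 4 the underlying signcryption tag-$\kem$ is $\sufcma$ secure. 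The ``Moreover'' part of Theorem~\ref{Genesec} then says precisely that $\sufcma$ security of the tag-$\kem$ carries over to $\hsc$, which finishes the argument.

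If a self-contained reduction is wanted instead of a black-box appeal to Theorem~\ref{Genesec}, I would proceed as follows. Let $\mathcal{F}$ be a PPT forger against the hybrid scheme with advantage $\epsilon$; I build a forger $\mathcal{F}'$ against the tag-$\kem$ $\sufcma$ game. $\mathcal{F}'$ forwards $\pk_s$ to $\mathcal{F}$ and simulates $\mathcal{F}$'s signcryption queries by calling its own $\mathcal{O}_{\sym}$ and $\mathcal{O}_{\encap}$ oracles (running $\symencrypt$ itself to produce the $\dem$ ciphertext $C$, then querying $\mathcal{O}_{\encap}$ on the tag $\tau=C$), and $\mathcal{F}$'s unsigncryption queries through $\mathcal{O}_{\decap}$ followed by $\symdecrypt$. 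When $\mathcal{F}$ halts with a forgery $((\textit{\textbf{c}},C),\sk_r)$, $\mathcal{F}'$ outputs $(E,\tau,\sk_r)$ with $E=\textit{\textbf{c}}$ and $\tau=C$. Since $\decap$ verifies the Wave signature against the tag, a valid hybrid forgery --- i.e.\ $\unsigncrypt$ does not reject and $(\textit{\textbf{c}},C)$ was never returned by the signcryption oracle --- translates into $\decap(\pk_s,\sk_r,E,\tau)\neq\perp$ with $(E,\tau)$ never output by $\mathcal{O}_{\encap}$ on $\tau$, so $\mathcal{F}'$ wins whenever $\mathcal{F}$ does. Combining with the bound extracted in the proof of Theorem~\ref{Secuproof11} gives $\epsilon\le\epsilon_{\text{Wave},\euf}+\epsilon_{\text{PreIm}}/2^{\kappa}$, which is negligible under Assumptions~2 and 4.

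The only point requiring care --- and the step I expect to be the main (indeed essentially the only) obstacle --- is verifying that the tag fed to $\encap$ and $\decap$ inside the hybrid construction is exactly the $\dem$ ciphertext $C$, so that freshness of the hybrid forgery $(\textit{\textbf{c}},C)$ is equivalent to freshness of the encapsulation/tag pair $(E,\tau)$ in the tag-$\kem$ game, and that the $\decap$ error symbol $\perp$ propagates to an $\unsigncrypt$ rejection. Once this bookkeeping is pinned down, no new hardness assumption beyond Assumptions~2 and 4 is needed, since all the cryptographic content already lives in Theorem~\ref{Secuproof11}; the remainder is the mechanical application of Theorem~\ref{Genesec}.
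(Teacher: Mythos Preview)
Your approach is correct and matches the paper's own proof exactly: the paper simply notes that Theorem~\ref{Secuproof11} gives $\sufcma$ security of the underlying tag-$\kem$ under Assumptions~2 and~4, and then invokes the ``Moreover'' clause of Theorem~\ref{Genesec} to conclude. Your first paragraph is precisely this argument (and your additional self-contained reduction and bookkeeping remarks go well beyond what the paper provides).
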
}
\begin{proof}$\ $

\noindent{Under Assumptions 2 and 4, the underlying signcryption tag-$\kem$ is $\sufcma$ secure and,  therefore, according to the Theorem \ref{Genesec}, the proposed   hybrid signcryption tag-$\kem+\dem$ is $\sufcma$ secure.}
\end{proof}
\section{Parameter values}\label{Param}

For our scheme, we choose parameters such that $\lambda_0=\lambda+2\log_2(q_{\sig})$ and  $\lambda_{\text{McE}}$ of the underlying Wave signature and McEliece's encryption, respectively,  satisfy $\max(\lambda_0, \lambda_{\text{McE}})\leq \left\lfloor \binom{n_r}{t}\right\rfloor$. According to the sender and receiver keys, the size of our ciphertext is given by 
$$
|E|=|\textit{\textbf{e}}|+|\textit{\textbf{c}}|+|C|=2n_s+n_r+\tilde{k} + 2\ell.
$$

Table~\ref{tab:Param} gives suggested values of the parameters of our scheme. These values have been derived using those of  Wave \cite{BaDeNe21} and Classic McEliece  \cite{TeamMcEClassic} for NIST PQC Level 1 security. According to the values  given in  Table \ref{tab:Param}, the ciphertext size in bits of our scheme is in the order of $|E|=2.9\times 10^{4}$.

\begin{table}[ht]
	\begin{center}
\begin{tabular}{lccccccccc}

   Parameter & $n_s$&$k_U$& $k_V$&$\omega$&$m$& $t$&$n_r$& $\tilde{k}$& $\ell$\\
   Value & 8492  & 3558  & 2047  & 7980 & 12 & 64  & 3488  &  1815  & 512 \\ 
\end{tabular}
			\end{center}
			\caption{Parameter values of the proposed scheme.}
			\label{tab:Param}
\end{table}
\vspace{-0.5cm}
Table  \ref{fig:keysize} provides key sizes of our scheme in terms of relevant parameters. Then in Table \ref{fig:comparison}  we give a numerical comparison of key and ciphertext sizes of our scheme with some existing lattice-based hybrid signcryption schemes. The rationale behind comparing our scheme against lattice-based schemes is that no code-based hybrid signcryption scheme exists in the literature and the underlying hard problems in both codes- and lattice-based schemes are considered quantum-safe. For the lattice-based schemes in our comparison, the parameters, including plaintext size of 512 bits, are from  \cite[Table 2]{SaSh18}. We can see that for post-quantum security level 1 the proposed scheme 
has the smallest  key and ciphertext sizes.

\begin{table}[ht]
	\begin{center}
\begin{tabular}{lcc}
   User &  Public key & Secret key \\ 
   Receiver's key size & $\tilde{k}n_r$ & $m(2n_r+t-\tilde{k}t)+\tilde{k}n_r$ \\
   Sender's key size & $r(n_s-r)\log_2(q)$ &  $(n_s(n_s+r)+r^2)\log_2(q)$ \\
\end{tabular}
	\end{center}
\caption{Key sizes of the proposed scheme.}
	\label{fig:keysize}
\end{table}
\vspace{-0.5cm}

\begin{table}[ht]
	\begin{center}
\begin{tabular}{lccccc}
    {Construction} & \multicolumn{2}{c}{Receiver's key size}& \multicolumn{2}{c}{Sender's key size}&{$\ $Ciph. size$\ $} \\
    & Pub. key & Sec. key&Pub. key & Sec. key& \\
   $SC_{TK}$ \cite{SaSh18,ChMaScMa11} & $8.5\times10^{7}$ & $4.2\times10^8$ & $8.4\times 10^7$ & $4.2\times 10^8$  & $5.5\times 10^{5}$ \\
    $SC_{KEM}$ \cite{SaSh18,ChMaScMa11}& $5.7\times10^7$ & $4.2\times 10^8$ & $8.5\times 10^7$ & $4.2\times 10^8$ & $5.2\times 10^{5}$\\
   $SC_{CHK}$ \cite{SaSh18,NaSh13} & $2.8\times 10^7$ & $4.2\times 10^8$ & $2.8\times 10^7$ & $4.2\times 10^8$ & $4.5\times 10^{6}$\\
Shingo and Junji \cite{SaSh18} & $2.8\times 10^7$ & $4.2\times 10^8$ & $2.8\times 10^7$ & $4.2\times 10^8$ &  $4.0\times 10^{5}$\\
   \hline
   Our scheme & $6.3\times 10^6$ & $5.0\times 10^6$ & $2.6\times 10^7$ & $1.7\times 10^8$ & $2.1\times 10^{4}$ \\
\end{tabular}
	\end{center}
\caption{Size comparison (in bits) of the proposed scheme with the lattice-based schemes of \cite{SaSh18,NaSh13,ChMaScMa11}.}
	\label{fig:comparison}
\end{table}

\section{Conclusion}\label{Conclusion}

In this paper, we have proposed a new 
signcryption tag-$\kem$ based on coding theory. The security of our scheme relies on known hard problems in coding theory.
We have used the proposed signcryption scheme to design a new code-based hybrid signcryption tag-$\kem$+$\dem$.   We have proven that the proposed schemes are $\indcca$ and $\sufcma$ secure against any probabilistic polynomial-time adversary. The proposed scheme has  a smaller ciphertext size compared to the pertinent lattice-based schemes.



\appendix
\section{PKE.Game}\label{pkegame}

Here we recall the $\indcca$ game for PKE called $\pkegame$ in our scheme. The decryption oracle is denoted by $\mathcal{O}$. 
\begin{figure}[ht]
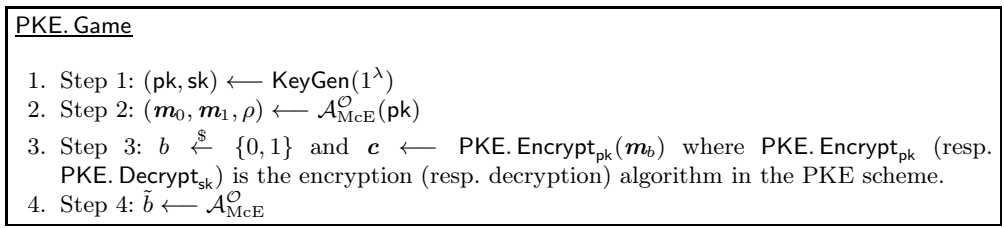

	\begin{center}
		\framebox{\parbox{13cm}{
\begin{minipage}[right]{13cm}

\noindent\textbf{\underline{$\pkegame$}}

\begin{enumerate} 
\item Step 1: $(\pk,\sk) \longleftarrow \keygen(1^{\lambda})$ 
\item Step 2: $(\textit{\textbf{m}}_0, \textit{\textbf{m}}_1, \rho) \longleftarrow \mathcal{A}_{\text{McE}}^{\mathcal{O}}(\pk)$
\item Step 3: $b\stackrel{\$}{\leftarrow} \{0,1\}$ and $\textit{\textbf{c}}\longleftarrow \pkeenc_{\pk}(\textit{\textbf{m}}_b)$ where $\pkeenc_{\pk}$ (resp. $\pkedec_{\sk}$) is the encryption (resp. decryption) algorithm in the PKE scheme.  
\item Step 4: $\tilde{b}\longleftarrow  \mathcal{A}_{\text{McE}}^{\mathcal{O}}$
\end{enumerate} 
\end{minipage}
}}  
		\end{center}
		\caption{$\pkegame$}
		\label{fig:pkegame}
	\end{figure}
	
In Step 4. the adversary $\mathcal{A}_{\text{McE}}$ is restricted not to make request to $\mathcal{O}$ on the ciphertext $\textit{\textbf{c}}$. Clear texts $\textit{\textbf{m}}_0$ and $\textit{\textbf{m}}_1$ must have the same length. $\mathcal{A}_{\text{McE}}$ wins when $\tilde{b}=b$ and its advantage corresponds to the probability that it wins this game which is denoted by 
$\epsilon_{pke}$.

\section{Target Preimage-Free}\label{PreImage}

Target Preimage-Free function is a special case of universal one-way function; An adversary is given $(\mathcal{H}, \textit{\textbf{y}})$ (chosen at random in their domain) and then attempts to find $\textit{\textbf{x}}$
such that $\mathcal{H}(\textit{\textbf{x}})=\textit{\textbf{y}}$. Let $\chi_{\lambda}=\{ X \}$ be a collection of domains and $\chi=\{ \chi_{\lambda}\}_{\lambda\in \mathbb{N}}$. Let $\tilde{\mathcal{H}}_{\lambda}=\{ \mathcal{H}:X \longrightarrow \{0,1\}^{\lambda}\ : \ X\in \chi_{\lambda} \}$ and $\tilde{\mathcal{H}}=\{ \tilde{\mathcal{H}}_{\lambda} \}_{\lambda\in \mathbb{N}}$. Note that $X$ is identified by the
description of $\mathcal{H}$. Let $\mathcal{A}_{\text{PreIm}}$ be an adversary playing the following game. 

\begin{figure}[ht]
\begin{center}
\framebox{\parbox{13cm}{
\begin{minipage}[right]{13cm}

\noindent\textbf{\underline{$\pregame$}}

\begin{enumerate} 
\item Step 1: $\mathcal{H}\longrightarrow \tilde{\mathcal{H}}_{\lambda}$
\item Step 2: $\textit{\textbf{y}}\longrightarrow \{0,1\}^{\lambda}$
\item Step 3: $\textit{\textbf{x}} \longrightarrow  \mathcal{A}_{\text{PreIm}}(\mathcal{H}, \textit{\textbf{y}})$ such that $\textit{\textbf{x}}\in X$.
\end{enumerate}
\end{minipage}
}}  
\end{center}
\caption{Preimage game}
\label{fig:PreIm}
\end{figure}

$\mathcal{A}_{\text{PreIm}}$ wins the game when $\mathcal{H}(\textit{\textbf{x}})=\textit{\textbf{y}}$ {and} the advantage of $\mathcal{A}_{\text{PreIm}}$ is the probability that it wins $\pregame$ for a given $\mathcal{H}\longrightarrow \tilde{\mathcal{H}}_{\lambda}$ and $\textit{\textbf{y}}\in \{0,1 \}^{\lambda}$. We say that $\tilde{\mathcal{H}}$ is Target Preimage free with regard to $\chi$ when the advantage $\epsilon_{\text{PreIm}}$ of $\mathcal{A}_{\text{PreIm}}$ is negligible.
\section{Security of the McEliece encryption with Fujisaki-Okamoto transformation}\label{McElie}

For the IND-CCA security of McEliece's scheme described in Figure \ref{fig:McElieceFO}, we need the following definition:

\begin{definition}\label{11}($\gamma$-uniformity \cite{CaHoPe12})
A public key encryption scheme $\Pi$ is called $\gamma$-uniform and $\mathcal{R}$ be the set where the randomness to be used in the (probabilistic) encryption is chosen. For a given key-pair
$(\pk, \sk)$, $\textit{\textbf{x}}$ be a plaintext and a string $\textit{\textbf{y}}$, we define
$$
\gamma(\textit{\textbf{y}})=Pr[\textit{\textbf{r}}\stackrel{\$}{\leftarrow}\mathcal{R}: \textit{\textbf{y}}=\mathcal{E}_{\pk}(\textit{\textbf{x}}, \textit{\textbf{r}})]
$$
where the notation $\mathcal{E}_{\pk}(\textit{\textbf{x}}, \textit{\textbf{r}})$ makes  the role of the randomness $\textit{\textbf{r}}$ {explicit}. We say that $\Pi$ is $\gamma$-uniform if, for any key-pair $(\pk, \sk)$, any plaintext $\textit{\textbf{x}}$  and any ciphertext $\textit{\textbf{y}}$, $\gamma (\textit{\textbf{x}}, \textit{\textbf{y}}) \leq  \gamma$ for a certain $\gamma \in \mathbb{R}$.
\end{definition}
We now can state the following lemma.

\begin{lemma}

The McEliece scheme with the Fujisaki-Okamoto transformation described in Figure \ref{fig:McElieceFO} is $\gamma$ uniform with
$$
\gamma=\frac{1}{2^{\tilde{k}}\binom{n}{t}}
$$
\end{lemma}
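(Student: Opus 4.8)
The plan is to compute, for an arbitrary key-pair $(\pk,\sk)$, an arbitrary plaintext $\textit{\textbf{x}} = \tau'\Vert\varpi$ and an arbitrary candidate ciphertext $\textit{\textbf{y}} = (\textit{\textbf{y}}_0\Vert\textit{\textbf{y}}_1)$, the probability over the internal randomness of the encryption algorithm that the algorithm outputs exactly $\textit{\textbf{y}}$. Recall that in the McEliece scheme with Fujisaki-Okamoto transformation of Figure~\ref{fig:McElieceFO} (as adapted here), the only randomness consumed is the constant-weight word $\textit{\textbf{y}}\stackrel{\$}{\leftarrow}\mathcal{W}_{2,n,t}$ drawn in the first step --- equivalently, via the encoding function $\phi$, a uniform choice of a $\kappa$-bit preimage where $\kappa=\lfloor\log_2\binom{n}{t}\rfloor$; everything downstream ($\textit{\textbf{r}}$, $\textit{\textbf{c}}_0$, $\textit{\textbf{c}}_1$) is then a deterministic function of that word together with $\textit{\textbf{x}}$ and $\pk$. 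So the set $\mathcal{R}$ in Definition~\ref{11} has size $\binom{n}{t}$ (one may instead count the $2^{\kappa}$ coins, but the clean statement is in terms of the constant-weight word), and each element is equally likely.

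First I would fix the target string $\textit{\textbf{y}}=(\textit{\textbf{y}}_0\Vert\textit{\textbf{y}}_1)$ and ask: how many choices of the random word $\sigma\in\mathcal{W}_{2,n,t}$ produce this exact output? The component $\textit{\textbf{c}}_1 = \mathcal{H}_3(\sigma)\oplus(\tau'\Vert\varpi)$ must equal $\textit{\textbf{y}}_1$; since $\mathcal{H}_3$ maps into $\{0,1\}^{\tilde k+\ell}$ and $\textit{\textbf{x}}=\tau'\Vert\varpi$ is fixed, this is a constraint forcing $\mathcal{H}_3(\sigma)=\textit{\textbf{y}}_1\oplus\textit{\textbf{x}}$. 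Treating $\mathcal{H}_3$ as behaving like a random function, each of the $\binom{n}{t}$ words $\sigma$ independently satisfies this with probability $2^{-(\tilde k+\ell)}$; but more to the point, the component $\textit{\textbf{c}}_0 = \textit{\textbf{r}}\mathbf{G}_{\pk,r}+\sigma$ with $\textit{\textbf{r}}=\mathcal{H}_1(\tau'\Vert\varpi\Vert\phi^{-1}(\sigma))$ must equal $\textit{\textbf{y}}_0$. Once $\sigma$ is fixed, $\textit{\textbf{r}}$ is determined and hence $\textit{\textbf{c}}_0$ is determined, so for at most one value of $\sigma$ can we have $\textit{\textbf{c}}_0=\textit{\textbf{y}}_0$ --- actually the cleanest argument: given $\textit{\textbf{y}}_0$ and the requirement $\textit{\textbf{c}}_0=\textit{\textbf{y}}_0$, we need $\sigma = \textit{\textbf{y}}_0 - \textit{\textbf{r}}\mathbf{G}_{\pk,r}$; combined with the hash constraints this pins down $\sigma$ essentially uniquely, and then additionally the $\tilde k$-bit and $\ell$-bit constraints coming from $\textit{\textbf{c}}_1$ cut the probability by a further factor of $2^{-(\tilde k)}$ relative to the naive count, yielding the claimed bound $\gamma = \frac{1}{2^{\tilde k}\binom{n}{t}}$.

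Concretely, I would argue: $\Pr[\textit{\textbf{y}} = \mathcal{E}_{\pk}(\textit{\textbf{x}},\sigma)] \le \Pr[\textit{\textbf{c}}_0 = \textit{\textbf{y}}_0 \text{ and } \mathcal{H}_3(\sigma) = \textit{\textbf{y}}_1\oplus\textit{\textbf{x}}]$. Split on the value of $\sigma$: the event $\textit{\textbf{c}}_0=\textit{\textbf{y}}_0$ holds for at most one $\sigma$ once one tracks the dependency chain $\sigma\mapsto\textit{\textbf{r}}\mapsto\textit{\textbf{c}}_0$ (the map $\sigma\mapsto\textit{\textbf{y}}_0-\textit{\textbf{r}}(\sigma)\mathbf{G}_{\pk,r}$ having at most one fixed point under the relevant assumptions, or alternatively: even crudely, $\Pr[\textit{\textbf{c}}_0 = \textit{\textbf{y}}_0] \le 1/\binom{n}{t}$ because $\sigma$ is uniform of weight $t$ and for each $\textit{\textbf{r}}$ there is exactly one $\sigma$ with $\textit{\textbf{c}}_0=\textit{\textbf{y}}_0$, wait that needs care). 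Then, conditioned on that, the extra independent constraint from $\textit{\textbf{c}}_1$ via $\mathcal{H}_3$ contributes the factor $2^{-\tilde k}$ (the remaining $\ell$ bits being absorbed or the $\tilde k$ coming from the hash image length $\tilde k+\ell$ against $\textit{\textbf{x}}$ of length $\tilde k+\ell$ --- the exact bookkeeping of which bits survive is where I must be careful). Multiplying the two gives $\gamma \le \frac{1}{2^{\tilde k}\binom{n}{t}}$.

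The main obstacle will be the bookkeeping of exactly which constraints are ``free'' and which are forced by the deterministic dependency $\sigma \mapsto \textit{\textbf{r}} \mapsto (\textit{\textbf{c}}_0,\textit{\textbf{c}}_1)$, so that the two factors $\binom{n}{t}$ and $2^{\tilde k}$ are genuinely independent and neither is double-counted; in particular one must verify that fixing $\textit{\textbf{c}}_0 = \textit{\textbf{y}}_0$ leaves a $\tilde k$-dimensional (not $(\tilde k+\ell)$-dimensional, not $0$-dimensional) slack that $\textit{\textbf{c}}_1=\textit{\textbf{y}}_1$ then kills. This amounts to understanding the interaction between the constant-weight encoding $\phi$ (which accounts for the $\binom{n}{t}$) and the one-time-pad masking by $\mathcal{H}_3$ over $\{0,1\}^{\tilde k+\ell}$ against a message $\tau'\Vert\varpi$ whose $\varpi$-part of length $\ell$ is itself the encapsulated secret, so that effectively only $\tilde k$ bits are ``pinned'' beyond what $\textit{\textbf{c}}_0$ already determines. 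Once that accounting is pinned down, the rest is the routine observation that the encryption randomness is uniform on a set of size $\binom{n}{t}$ and the claimed $\gamma$ follows. I would also remark that $\gamma$-uniformity with this value is exactly the hypothesis needed to feed into the Fujisaki-Okamoto IND-CCA2 analysis invoked in Appendix~\ref{McElie}, which is why the lemma is stated here.
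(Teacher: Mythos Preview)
Your approach diverges from the paper's and contains a genuine gap in the accounting.

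The paper's argument is a one-line unique-decoding observation applied only to the first component $\textit{\textbf{c}}_0\in\mathbb{F}_2^{n_r}$: either the target vector is at Hamming distance exactly $t$ from the code generated by $\mathbf{G}_{\pk,r}$, in which case there is a \emph{unique} pair $(\textit{\textbf{r}},\textit{\textbf{e}})\in\mathbb{F}_2^{\tilde k}\times\mathcal{W}_{2,n_r,t}$ with $\textit{\textbf{c}}_0=\textit{\textbf{r}}\mathbf{G}_{\pk,r}\oplus\textit{\textbf{e}}$, or it is not and no pair works. The random-coin space being taken as $\mathbb{F}_2^{\tilde k}\times\mathcal{W}_{2,n_r,t}$ of size $2^{\tilde k}\binom{n_r}{t}$, the bound $\gamma=1/(2^{\tilde k}\binom{n_r}{t})$ drops out immediately. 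In other words, the factor $2^{\tilde k}$ is the contribution of $\textit{\textbf{r}}$ to the coin space, not a constraint coming from $\textit{\textbf{c}}_1$ or from $\mathcal{H}_3$.

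Your route instead fixes only $\sigma$ as the randomness and then tries to manufacture the missing $2^{-\tilde k}$ from the $\textit{\textbf{c}}_1$ component by ``treating $\mathcal{H}_3$ as behaving like a random function.'' This cannot work inside the $\gamma$-uniformity definition: the probability in Definition~\ref{11} is taken over the encryption coins alone, with the hash functions fixed. Once $\sigma$ is pinned down by the requirement $\textit{\textbf{c}}_0=\textit{\textbf{y}}_0$ (your own fixed-point argument), the value $\textit{\textbf{c}}_1=\mathcal{H}_3(\sigma)\oplus\textit{\textbf{x}}$ is completely determined---it either equals $\textit{\textbf{y}}_1$ or it does not, with no residual $2^{-\tilde k}$ slack. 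Your own parenthetical hedging (``wait that needs care'', ``the exact bookkeeping \dots is where I must be careful'') is pointing at exactly this issue. With $\sigma$ as the sole coin you would obtain at best $\gamma\le 1/\binom{n_r}{t}$, which is not the claimed bound. The fix is to adopt the paper's framing: treat the underlying asymmetric map as $(\textit{\textbf{r}},\textit{\textbf{e}})\mapsto\textit{\textbf{r}}\mathbf{G}_{\pk,r}\oplus\textit{\textbf{e}}$ with both $\textit{\textbf{r}}$ and $\textit{\textbf{e}}$ random, and invoke unique decoding.
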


\begin{proof}
For any vector $\textit{\textbf{y}}\in \mathbb{F}_2^{n_r}$, either $\textit{\textbf{y}}$ is a word at distance $t$ from the code $\mathcal{C}$ of generator matrix $\mathbf{G}_{\pk,r}$, or it isn’t. When $\textit{\textbf{y}}$ is not a distance $t$ of $\mathcal{C}$, the probability for it to be a valid ciphertext is equal to 0. Else there is only one choice for $\textit{\textbf{r}}$ and $\textit{\textbf{e}}$ such that $\textit{\textbf{y}}=\textit{\textbf{r}}\mathbf{G}_{\pk,r}\oplus\textit{\textbf{e}}$, i.e.,
$$
\pr (d(\textit{\textbf{y}}, \mathcal{C}))=t)=\frac{1}{2^{\tilde{k}}\binom{n_r}{t}}
$$
\end{proof}
\begin{theorem}
{Under Assumptions 1, 3, and 5 the McEliece scheme based on a subcode of Goppa code with the Fujisaki-Okamoto transformation described in Figure \ref{fig:McElieceFO} is $\indcca$ secure.}
\end{theorem}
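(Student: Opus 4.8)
The plan is to run the classical Fujisaki--Okamoto argument in the random oracle model (modelling $\mathcal{H}_0$ and $\mathcal{H}_1$ as random oracles), using the $\gamma$-uniformity bound $\gamma = 1/(2^{\tilde{k}}\binom{n_r}{t})$ just established, and to push the remaining hardness down to the one-wayness of the underlying McEliece primitive instantiated with a permuted Goppa \emph{subcode}; that one-wayness is then shown to follow from Assumptions~1, 3 and~5. Throughout, the reduction maintains the query lists $L_0$, $L_1$ of the two random oracles and uses them both to simulate decryption without the secret key and to detect the "critical" query that breaks one-wayness.

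I would proceed by a short sequence of games starting from the real $\indcca$ experiment $\pkegame$ (Appendix~\ref{pkegame}) against the scheme of Figure~\ref{fig:McElieceFO}. \textbf{Game~1:} make the challenge ciphertext independent of the hidden bit by sampling $\textit{\textbf{r}}^{\star}$, $\textit{\textbf{y}}^{\star}$ first, programming $\mathcal{H}_0(\textit{\textbf{y}}^{\star}\Vert\textit{\textbf{m}}_b):=\textit{\textbf{r}}^{\star}$, setting $\textit{\textbf{c}}_0^{\star}:=\textit{\textbf{r}}^{\star}\mathbf{G}_{\pk}+\textit{\textbf{y}}^{\star}$, and replacing $\mathcal{H}_1(\textit{\textbf{y}}^{\star})$ (hence $\textit{\textbf{c}}_1^{\star}$) by a uniform string; the adversary's view changes only on the event $\mathsf{AskH}$ that it queries $\textit{\textbf{y}}^{\star}\Vert\textit{\textbf{m}}_b$ to $\mathcal{H}_0$ or $\textit{\textbf{y}}^{\star}$ to $\mathcal{H}_1$. \textbf{Game~2:} answer each decryption query $\textit{\textbf{c}}=(\textit{\textbf{c}}_0\Vert\textit{\textbf{c}}_1)$ by searching $L_0\times L_1$ for a consistent pair $(\textit{\textbf{r}},\textit{\textbf{y}})$ that re-encrypts to $\textit{\textbf{c}}$ and passes the verification check, returning $\perp$ otherwise, so that $\sk$ is no longer used; the $\gamma$-uniformity lemma bounds the probability that a genuinely valid ciphertext is rejected by at most $\gamma$ per query, costing at most $q_D\gamma$ (plus a negligible $\mathcal{H}_0$/$\mathcal{H}_1$ collision term) over all $q_D$ decryption queries. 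In the last game the bit $b$ is information-theoretically hidden, so the distinguishing advantage is $0$ there, and it remains to bound $\Pr[\mathsf{AskH}]$.

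To bound $\Pr[\mathsf{AskH}]$ I would build an inverter $\mathcal{B}$ against the one-wayness of the subcode-McEliece map: given a challenge $\textit{\textbf{c}}_0^{\star}=\textit{\textbf{r}}^{\star}\mathbf{G}_{\pk}+\textit{\textbf{y}}^{\star}$ for the public generator matrix $\mathbf{G}_{\pk}$ of a permuted Goppa subcode, $\mathcal{B}$ uses it as the first component of the challenge ciphertext, simulates random-oracle and (key-free) decryption queries as in Game~2, and upon $\mathsf{AskH}$ reads $\textit{\textbf{y}}^{\star}$ off $L_1$ (or from $L_0$), recovering $\textit{\textbf{r}}^{\star}$ and winning, so $\Pr[\mathsf{AskH}]\le q_H\cdot\epsilon_{\mathrm{OW}}$ up to the additive terms above. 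Finally, the one-wayness $\epsilon_{\mathrm{OW}}$ of the subcode primitive is bounded by the three assumptions exactly as in the key-recovery discussion of Section~\ref{Sec:Analysis}: a successful inverter either exploits code structure, which forces it both to distinguish the permuted Goppa subcode from random (contradicting Assumption~5 together with the subcode-distinguishing gap) and to solve the subcode equivalence problem linking $\mathbf{G}_{\pk}$ to an actual Goppa code (Assumption~3), or it treats $\mathbf{G}_{\pk}$ as a random generator matrix, in which case recovering $(\textit{\textbf{r}}^{\star},\textit{\textbf{y}}^{\star})$ from $\textit{\textbf{r}}^{\star}\mathbf{G}_{\pk}+\textit{\textbf{y}}^{\star}$ is the general decoding problem (Assumption~1). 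Collecting everything yields $\epsilon_{\indcca}\le q_H(\epsilon_{\sd}+\epsilon_{ES}+\epsilon_{GCD})+q_D\gamma+\nu(\lambda)$ for a negligible $\nu$, which is negligible.

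\textbf{The main obstacle} is the decryption-oracle simulation in Game~2. Because the transformation here is "FO-light" --- $\textit{\textbf{y}}$ is hashed into the coins $\textit{\textbf{r}}$ and into the one-time pad masking $\textit{\textbf{m}}$, but $\textit{\textbf{y}}$ itself is not independently re-encrypted --- one must check carefully that the plaintext extractor is \emph{sound} (it never accepts a ciphertext the real decryption would reject) and \emph{complete} up to probability at most $\gamma$ per query, and that the target ciphertext is handled consistently (returning $\perp$ without leaking $b$). The subtle bookkeeping is that $\mathcal{H}_0$ is queried on $\textit{\textbf{y}}\Vert\textit{\textbf{m}}$ whereas the decryption simulator only learns $\textit{\textbf{m}}$ after decoding $\textit{\textbf{c}}_0$, so the search over $L_0$ and the programming of $\mathcal{H}_0$ at the challenge point must be made mutually compatible; getting the event definitions and these interactions right is the delicate part, the rest being the routine additive accounting.
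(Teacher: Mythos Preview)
Your proposal is correct and follows essentially the same route as the paper: both establish $\indcca$ security by combining the $\gamma$-uniformity lemma with the one-wayness of the subcode-McEliece primitive (derived from Assumptions~1, 3 and~5) and the Fujisaki--Okamoto transformation. The only difference is granularity: the paper simply checks the preconditions (one-time-pad symmetric part, $\gamma$-uniformity, one-wayness) and invokes \cite[Theorem~12]{FuOk99} as a black box, whereas you reconstruct the FO game-hopping argument and the plaintext-extractor simulation in full.
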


\begin{proof}
In Figure \ref{fig:McElieceFO}, the symmetric encryption used is the XOR function which is a one-time pad. Under Assumptions 1 and 3, the old McEliece encryption scheme is one-way secure. Therefore according to Theorem 12 of \cite{FuOk99}, the McEliece scheme with the Fujisaki-Okamoto transformation is $\indcca$ secure. 
\end{proof}
\end{document}